\def\endthebibliography{%
	\def\@noitemerr{\@latex@warning{Empty `thebibliography' environment}}%
	\endlist
}
\DeclareMathAlphabet{\mathpzc}{OT1}{pzc}{m}{it}
\def\cF{\mathcal F}
\def\cG{\mathcal G}
\def\cI{\mathcal I}
\def\cH{\mathcal H}
\def\cK{\mathcal K}
\def\cX{\mathcal X}
\def\cQ{\mathcal Q}
\def\cP{\mathcal P}
\def\cR{\mathcal R}
\newcommand{\ba}{\text{\bf a}}
\newcommand{\bp}{\text{\bf p}}
\newcommand{\by}{\text{\bf y}}
\newcommand{\bs}{\text{\bf s}}
\newcommand{\bu}{\text{\bf u}}
\newcommand{\bv}{\text{\bf v}}
\newcommand{\trmp}{\ensuremath{\mathsf{TRMP}}}
\newcommand{\ub}{\ensuremath{\mathsf{UB}}}
\renewcommand{\fnum@figure}{Fig. \thefigure}
\newtheorem{lemma}{Lemma}
\newtheorem{theorem}{\textbf{\textsc{Theorem}}}
\begin{document}
	\bibliographystyle{IEEE2}
	
	\title{Joint Rate Allocation and Power Control for RSMA-Based Communication and Radar Coexistence Systems}
	\author{Trung Thanh Nguyen, Nguyen Cong Luong, Shaohan Feng, Tien Hoa Nguyen, Khaled Elbassioni, Dusit Niyato, ~\IEEEmembership{Fellow,~IEEE}, and
		Dong In Kim,~\IEEEmembership{Fellow,~IEEE}

		\thanks{T. T. Nguyen and N. C. Luong are with Faculty of Computer Science, Phenikaa University, Hanoi 12116, Vietnam (e-mails: \{thanh.nguyentrung, luong.nguyencong\}@phenikaa-uni.edu.vn.)}

		\thanks{S. Feng is with Institute for Infocomm Research, Singapore 138632 (e-mail: fengs@i2r.a-star.edu.sg).}
		
		\thanks{T. H. Nguyen is with School of Electrical and Electronic Engineering, Hanoi University of Science and Technology, Hanoi 100000, Vietnam (e-mail: hoa.nguyentien@hust.edu.vn).}
		
		\thanks{K. Elbassioni is with Khalifa University of Science and Technology, Abu Dhabi, United Arab Emirates (e-mail: khaled.elbassioni@ku.ac.ae).} 
		
		\thanks{D. Niyato is with the School of Computer Science and Engineering, Nanyang Technological University, Singapore (e-mail: dniyato@ntu.edu.sg).}
		
		\thanks{D.~I.~Kim is with the Department of Electrical and Computer Engineering, Sungkyunkwan University, Suwon 16419, South Korea (e-mail:dikim@skku.ac.kr).}
	}
	
	
	\maketitle

	\begin{abstract}
		We consider a rate-splitting multiple access (RSMA)-based communication and radar coexistence (CRC) system. The proposed system allows an RSMA-based communication system to share spectrum with multiple radars. Furthermore, RSMA enables flexible and powerful interference management by splitting messages into common parts and private parts to partially decode interference and partially treat interference as noise. The RSMA-based CRC system thus significantly improves spectral efficiency, energy efficiency and quality of service (QoS) of communication users (CUs). However, the RSMA-based CRC system raises new challenges. Due to the spectrum sharing, the communication network and the radars cause interference to each other, which reduces the signal-to-interference-plus-noise ratio (SINR) of the radars as well as the data rate of the CUs in the communication network. Therefore, a major problem is to maximize the sum rate of the CUs while guaranteeing their QoS requirements of data transmissions and the SINR requirements of multiple radars. To achieve these objectives, we formulate a problem that optimizes i) the common rate allocation to the CUs, transmit power of common message and transmit power of private messages of the CUs, and ii) transmit power of the radars. The problem is non-convex with multiple decision parameters, which is challenging to be solved. We propose two algorithms. The first  sequential quadratic programming (SQP) can quickly return a local optimal solution, and has been known to be the state-of-the-art in nonlinear programming methods. The second is an additive approximation scheme (AAS) which solves the problem globally in a reasonable amount of time, based on the technique of applying exhaustive enumeration to a modified instance. The simulation results show the improvement of the AAS compared with the SQP in terms of sum rate. Furthermore, with the AAS, the sum rate of the CUs only slightly decreases when the radars' SINR is significantly increased. This implies that the AAS supports the RSMA-based communication system which allows to well coexist with the radars.
	\end{abstract}
	
	\begin{IEEEkeywords}
		Rate-splitting multiple access, communication and radar coexistence, rate and power allocation, additive approximation scheme. 
	\end{IEEEkeywords}
	
	\section{Introduction}
	With the rapid growth of multimedia applications, e.g., virtual reality, and mobile devices, spectrum resource is becoming increasingly congested. As a consequence, mobile network operators seek to reuse spectrum of other systems~\cite{liu2020joint} as well as advanced wireless technologies. In particular, coexisting radar and communication (CRC) systems have been recently considered that allow an individual communication system to share spectrum with radar systems~\cite{luong2021radio}. Such a CRC system can be found in many realistic scenarios. For example, the air-surveillance radars and the 5G NR and FDD-LTE cellular systems share the L-band, i.e., $1-2$ GHz~\cite{wang2017spectrum}. As a different example, the early warning radars share the S-band, i.e., $2-4$ GHz, with the communication systems like 802.11b/g/n/ax/y WLAN networks and $3.5$ GHz TDD-LTE and 5G NR~\cite{hessar2016spectrum}. Meanwhile, for advanced wireless technologies, rate-splitting multiple access (RSMA)~\cite{dizdar2020rate}, \cite{mao2022rate}, \cite{clerckx2016rate} has emerged as a promising solution for the next generation network that is able to achieve high spectrum efficiency, robust, and high data rate. With RSMA, a base station (BS) as a transmitter splits each message intended for a user into a common part and a private part. The common parts of all users are combined into a common massage and can be decoded by all the users, while the private part is encoded into a private message which can be decoded by only its intended user. To receive the common message, the user suffers the interference from all the users' private messages, and to receive the private message, the user only considers the interference from other users' private messages to be noise. Each user then recovers its original message from its common message and intended private message with successive interference cancellation (SIC). By splitting messages into common parts and private parts to partially
	decode interference and partially treat interference as noise, RSMA enables flexible and powerful interference management, which enhance spectral
	efficiency, energy efficiency,
	reliability, and quality of service (QoS) compared with existing multiple access technologies such as space division multiple access (SDMA) and non-orthogonal multiple access (NOMA)~\cite{dizdar2020rate}, \cite{mao2018energy}. 
	
	The benefits of CRC and RSMA motivate us to investigate a novel system, namely RSMA-based CRC system. The system allows a communication network to leverage RSMA to serve multiple communication users (CUs) and share spectrum resources with multiple radars. The RSMA-based CRC system is thus expected to significantly improve the spectrum efficiency. However, the coexistence of the communication network and multiple radars in the RSMA-based CRC system imposes challenges of resource management. In particular, the major problem is to properly adjust the common rates, transmit power of the common message, and private messages for the CUs as well as to control the transmit power of the radars to maximize the sum rate of all the CUs, subject to the QoS requirements of the CUs and the signal-to-interference-plus-noise ratio (SINR) requirements of the radars. Like most of the optimization problems in communication system design, this problem belongs to the class of fractional programming problems, which involves ratios of linear functions and is known to be computationally NP-hard \cite{FreundJ01}.  Furthermore,  due to the minimum requirement for the rate of CUs, checking if there is a feasible solution is as difficult as the original problem itself. Two methods have been well studied  
	for finding locally optimal solution to fractional programming problems in the context of communication system such as Successive Convex Approximation \cite{WangV12,ScutariFSPP14,YeC20,yang2021optimization} and iterative algorithms based on Dinkelbach's transformation (see \cite{ShenY18} and the references therein), but they both require an initial feasible solution as input and thus are inapplicable to our problem. On the other hand, existing global optimization algorithms have been mainly based on the branch-and-bound technique \cite{lawler1966branch}, and only applied to linear constraints. The cases of nonlinear constraints are much more complicated and may require the development of more general and sophisticated techniques.  
	
	To the best of the author's knowledge, this is the first work investigating the coexistence of the RSMA-based communication network and multiple radars. Indeed, there exist recent works, i.e., \cite{shi2018non}, \cite{shi2019robust}, \cite{li2017joint}, \cite{liu2018transmission}, \cite{liu2017robust}, \cite{qian2022transmission}, \cite{hong2021interference}, \cite{rihan2018optimum}, \cite{hong2019ergodic}, and \cite{kafafy2022optimal} related to CRC, but non of them considers the use of RSMA for the communication networks. In particular, the work in \cite{shi2018non} considers a traditional communication system coexisting with multiple radars. The work aims to minimize the power consumption of each radar by optimizing the transmission power allocation, subject to the radars' SINR requirements and a maximum interference tolerant limit for communication network. The authors in \cite{li2017joint} consider a CRC system with a single communication user and a radar. The objective is to maximize the radar's SINR, subject to the communication rate and power constraints. The works in \cite{liu2018transmission} and \cite{liu2017robust} extend the communication system in~\cite{li2017joint} to scenarios with multiple downlink users. The work in \cite{qian2022transmission} aims to optimize the waveforms of the radar system and the codebook of the communication system, subject to the radar similarity with a radar reference waveform and the data rate requirement of the communication system. Different from~\cite{qian2022transmission}, the work in \cite{rihan2018optimum}  considers a collocated radar and communication system, and thus the joint transmit and receive beamformers for both the radar and communication systems is optimized to maximize SINR of both the radar and communication systems. Unlike \cite{rihan2018optimum}, the work in \cite{hong2021interference} leverages the interference existing already in radar and communications systems to keep constructive and can contribute to the power of the useful signal. Simulation results show that the signal-to-noise ratio (SNR) can be enhanced by $7$ dB. The channels involved in the CRC systems can vary over time that can reduce the effectiveness of the interference management approaches. Consider this issue, the authors in \cite{hong2019ergodic} propose an ergodic interference alignment, which allows to time-varying and/or frequency selective channels for the CRC system. In particular, this work designs a transmit beamforming, which helps to find a pair of desired complementary channels easily without long-time system delay, thus effectively eliminating the interference in the CRC system. Another traditional solution for the interference elimination between the radar system and communication system is to place the communication system outside a guard zone of the radar. However, the CUs in the guard zone have bad QoS experience since they are out of the coverage of the communication system. To address this issue, the work in 	\cite{kafafy2022optimal} proposes to deploy intelligent reflecting surface (IRS) in the guard zone to enhance the network coverage and QoS of the users.
	
	There also exist several works related to RSMA, and the readers are referred to a comprehensive survey on RSMA in~\cite{mao2022rate}. However, the existing works do not investigate CRC. In particular, the authors in~\cite{clerckx2019rate} and \cite{mao2018energy} evaluate the performance obtained by RSMA. These works demonstrate that RSMA outperforms both NOMA and SDMA in terms of spectrum efficiency, energy efficiency, QoS, and computational
	complexity in different network loads and channel conditions. The work in \cite{joudeh2016sum} aims to optimize the beamformers associated with the common message and private messages to maximize the sum rate of CUs in  multi-user multiple input single output (MISO) systems under imperfect
	channel state information. The authors in \cite{hao2015rate} analyze the data rate obtained by
	RSMA for two-receiver MISO broadcast channel with finite rate feedback. Considering a multi-user single input single output (SISO), the authors in~\cite{yang2021optimization} aim to allocate the common rate, transmit power of the common message, and transmit power of the private messages by accounting for successful SIC power requirements. Different from \cite{joudeh2016sum}, \cite{hao2015rate}, and \cite{yang2021optimization}, the works in~\cite{joudeh2016robust} and \cite{yin2021rate} aim to achieve max-min fairness among the users, i.e., maximizing the user capacity with the minimum data rate, in the RSMA-based network. Recently, RSMA can also be combined with enabling technologies in 6G. In particular, RSMA is shown to efficiently manage the inter-beam interference for satellite communications ~\cite{yin2020rate}. RSMA can be combined with simultaneous wireless information and power transfer (SWIPT). As presented in~\cite{mao2019rate}, RSMA-based SWIPT can significantly improve the sum rate of information receivers (IRs) compared with NOMA-based SWIPT. The combination of RSMA and intelligent reflecting surface (IRS) has emerged as a promising solution that can improve the energy efficiency~\cite{yang2020energy} and reduce the outage probability~\cite{bansal2021rate}. Some recent works, e.g., \cite{dizdar2022energy} and ~\cite{xu2021rate}, investigate an integration of RSMA into a radar system on the same hardware, namely RSMA-enabled dual-function radar communication (DFRC). In the RSMA-enabled DFRC system, an RSMA signal is designed to simultaneously perform both the communication function and radar function. As such, apart from achieving the high spectrum efficiency, the RSMA-enabled DFRC system reduces the hardware size and cost. However, sharing the hardware can impair the performance of each function. Moreover, it is costly or even impossible to deploy such a system in existing radar system or communication network.
	
	In summary, the works related to CRC have not considered RSMA as the communication system and leverage the effective interference management of RSMA to handle the interference from the radars. On the other hand, in the works related to RSMA, none of them has considered that the RSMA can effectively manage the co-channel interference between the communication and radar systems so as to enhance their performance simultaneously in the CRC scenario. To bridge this gap, we investigate the RSMA-based CRC system. The main contributions of this work are follows:

	\begin{itemize}
		\item We consider a novel communication and radar coexistence system, namely RSMA-based CRC. In the RSMA-based CRC system, a BS in the communication network shares spectrum with multiple radars while leveraging the RSMA scheme to serve its multiple CUs. The RSMA-based CRC system thus achieves the advantages of both RSMA and CRC, which particularly enhances the spectrum efficiency. Due to the spectrum sharing, the radar and communication systems cause interference to each other. Thus, the objective is to maximize the sum rate of all the CUs in the communication network subject to the requirements of the CUs' QoS and the radars' SINR as well as the power budgets of the BS and the radars. 
		\item To achieve the objective, we formulate a problem that optimizes i) common rates of the CUs, ii) transmit power of the common message of the CUs, iii) transmit power of the private messages for the intended CUs, and iv) transmit power of the radar systems. 
		\item The optimization problem is non-convex, which is challenging to solve. To solve it, we first present a sequential quadratic programming (SQP) algorithm~\cite{boggs1995sequential} which iteratively converges to a local optimal solution. 
		\item As a major contribution, we propose a more general technique for solving the problem globally. Particularly, we design an additive approximation scheme  AAS\footnote{AAS stands for Additive Approximation Scheme, which is a family of algorithms that, for each $\delta>0$, return a solution with an absolute error in the objective of at most $\delta\cdot h$ for some suitable parameter $h$ \cite{V0004338}.}, based on the technique of applying exhaustive enumeration to a modified instance. The algorithm runs in exponential time in the number of users, and produces a solution whose value is within an additive error controlled by a parameter $\delta\in (0,1)$. 
		\item We provide the simulation results to evaluate the proposed algorithms. The simulation results insightfully show that the AAS significantly improves the performance in terms of sum rate compared with the SQP. The simulation results further show that the combination of the AAS and RSMA is an effective solution that allows the communication system to well coexist with the radars. In addition, appropriate locations of the radars can be suggested. 
	\end{itemize}

	For the reader's convenience, we summarize the mathematical notations used in this paper in Table~\ref{table:major_notation}. Some other symbols are given in Table~\ref{table:parameters}. 
	\begin{table}[h!]
		\caption{List of mathematical symbols used in this paper.}
		\label{table:major_notation}
		\centering
		\begin{tabular}{|l|l|}
			\hline
			\textbf{Notation} & \textbf{Description}   \\ [1ex]
			\hline $|V|$ &The size of a set $V$\\
			\hline $\cQ$ &The set of CUs\\
            \hline $\cK$ &The set of radars\\
            \hline $Q=|\cQ|$ &The number of CUs\\
            \hline $K=|\cK|$ &The number of radars\\
			\hline $B$ &The total bandwidth\\
			\hline $\sigma^2_q$&  The noise variance at CU $q$\\
			\hline $\sigma^2_k$&  The noise variance at radar $k$\\
			\hline $n^{\rm{C}}_{q}\sim \mathcal{N}(0,\sigma_{q}^2)$& The noise at the CU $q$\\
			\hline $n^{\rm{R}}_{k}\sim \mathcal{N}(0,\sigma_{k}^2)$& The noise at the radar $k$\\
			\hline $p_0$&  The transmit power of common message $s_0$\\
			\hline $p_q$&  The transmit power of private message $s_q$\\
			\hline $p_k^{\rm{R}}$ &The transmit power of radar $k$\\
			\hline $h^{\rm{C}}_q$ &The channel gain between the BS and CU $q$\\  
			\hline $g^{\rm{RC}}_{k,q}$ & The channel gain between radar $k$ and CU $q$\\
			\hline $h^{\rm{R}}_{k}$ & The round-trip channel gain of radar $k$\\ 
			\hline $g^{\rm{RR}}_{k',k}$ & The channel gain of the direct link from the TX\\
			 &  of radar $k'$ to the RX of radar $k$\\
			\hline $g^{\rm{RTR}}_{k',k}$ & The channel gain of the indirect link of the TX\\
			 & of radar $k'$, the target and the RX of radar $k$\\ 
			\hline $h^{\rm{CR}}_k$ & The channel gain between the BS and radar $k$\\
			\hline $f_i,g_i$ &The radar and communication functions of user $i$\\
            \hline $\bp$        & The power vector of dimension ${Q+K+1}$ \\
            \hline $\ba$        & The common data rate vector of dimension $Q$ \\
			\hline $\cF(\ba,\bp)$  &The multivariate function of variables $(\ba,\bp)$\\
			\hline $L$  &The Lagrange function\\
			\hline $\nabla\cF$  & The Gradient of function $\cF$\\
			\hline $\nabla^2\cF$  & The Hessian matrix of function $\cF$\\
			\hline$\epsilon,\delta$& The additive errors of algorithms\\
			\hline
		\end{tabular}
		\label{table:parameters}
	\end{table}

	The rest of the paper is organized as follows. In Section~\ref{sys_model}, we present the RSMA-based CRC system and formulate the optimization problem. Sections~\ref{sec:iter_algorithm} and~\ref{sec:exact_algorithm} are devoted to presenting the two proposed algorithms, SQP and AAS, respectively. Section~\ref{sec:perform-eval} provides simulation results to demonstrate the effectiveness of the proposed algorithm. The conclusions of this paper are given in Section~\ref{conclusion}.

	\section{System Model}
	\label{sys_model}
	
	In this section a novel communication and radar coexistence system, namely the RSMA-based CRC system is introduced with details. We then describe the problem of maximizing the sum rate of the CUs in the system under various constraints, and give it a mathematical formulation.
	\subsection{Signal Model}
	
	\begin{figure*}[h!]
		\centering
		\includegraphics[width=0.9\textwidth]{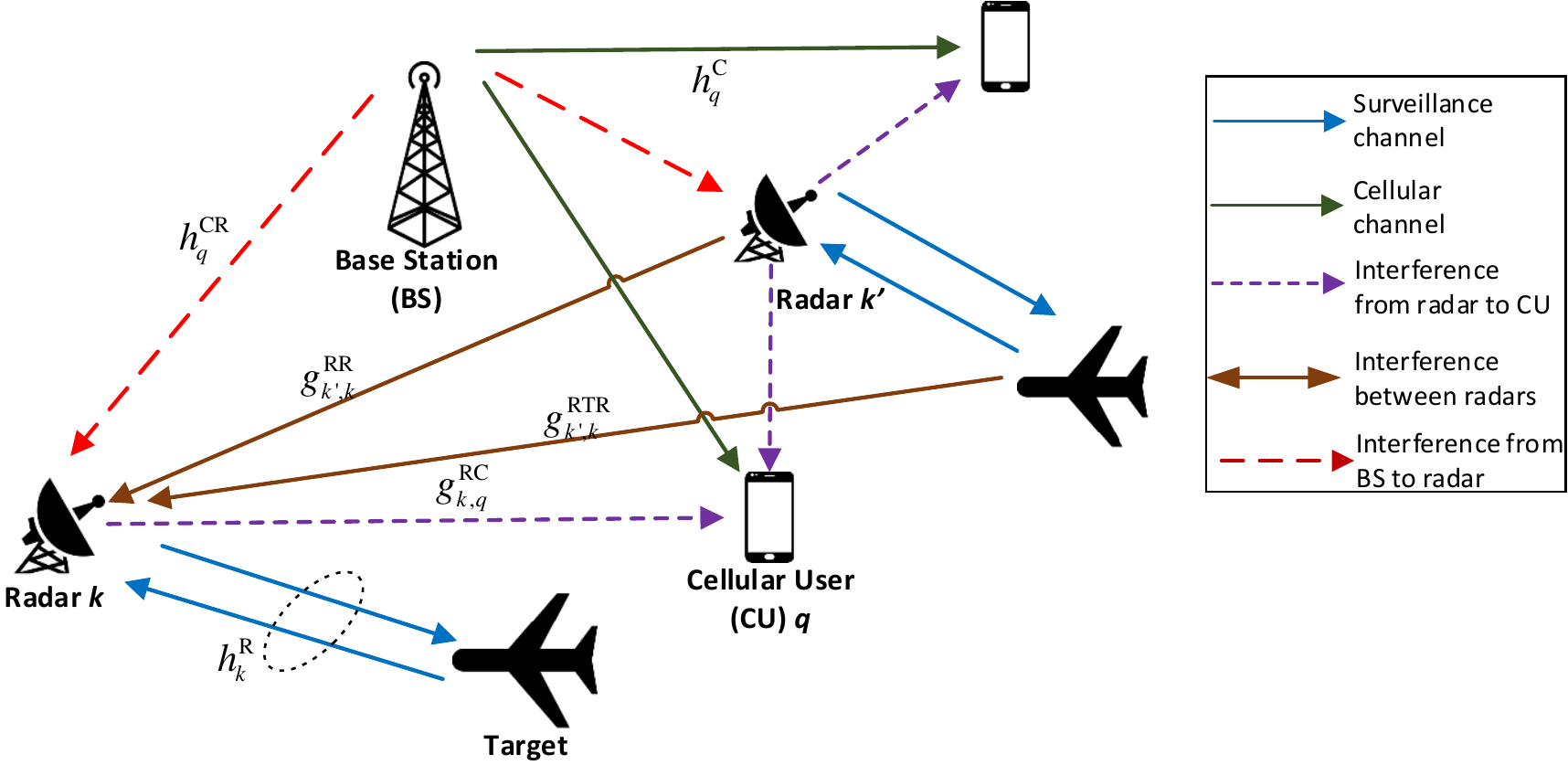}
		\caption{Illustration of the RSMA-based communication and radar coexistence (RSMA-based CRC) system.}
		\label{fig:system_model}
	\end{figure*}
	
	The RSMA-based CRC is shown in Fig.~\ref{fig:system_model}. Each radar is monostatic in which its transmitter (denoted by ``TX'')and receiver (denoted by ``RX'') are co-located. The radars perform tracking a common target or multiple targets. The set of the radars is denoted by ${\cal{K}}$ and the set of CUs is defined as ${\cal{Q}}$. The cardinalities of ${\cal{K}}$ and ${\cal{Q}}$ are $K$ and $Q$, respectively. By using RSMA, the BS transmits a common message of all the users and private messages to the intended users. In particular, the transmit signal from the BS is given by~\cite{yang2021optimization}
	\begin{align*}
		x = \sqrt {{p_0}} {s_0} + {\sum}_{q = 1}^Q {\sqrt {{p_q}} {s_q}},
	\end{align*}
	where $s_0$ is the common message, $s_q$ is the private message intended for the $q$-th user, $p_0$ is the transmit power of the common message $s_0$, and $p_q$ is the transmit power of private message $s_q$. The received signal at CU $q$ consists of the signal from the BS and the radar signals directly emitted from $K$ radars, which is given by~\cite{shi2018non}
	\begin{equation*}\label{eq:received_comm_signal}
		y^{\rm{C}}_{q} = \sqrt {{p_0}h^{\rm{C}}_q} {s_0} + {\sum}_{q = 1}^Q {\sqrt {{p_q}h^{\rm{C}}_q} {s_q}} + {\sum}_{k=1}^{K}  \sqrt{p^{\rm{R}}_{k}g^{\rm{RC}}_{k,q}}s^R_k + {n^{\rm{C}}_{q}},
	\end{equation*}
	where $h^{\rm{C}}_q$ is the channel gain between the BS and CU $q$, $g^{\rm{RC}}_{k,q}$ is the channel gain between radar $k$ and CU $q$, $s^R_k$ is the radar signal, $p_k^{\rm{R}}$ is the transmit power of radar $k$, and ${n^{\rm{C}}_{q}}\sim \mathcal{N}(0,\sigma_{q}^2)$ is the AWGN noise at the CU with $\sigma_{q}^2$ being the variance. 
	
	The received signal at RX of radar $k$ consists of i) the radar signals reflected by the target, ii) the radar signals directly emitted by other radars, and iii) the RSMA signal from the BS. Thus, the received signal at RX of radar $k$ is expressed by
	\begin{multline*}\label{eq:received_radar_signal}
		y^{\rm{R}}_{k} =  \sqrt{p^{\rm{R}}_kh^{\rm{R}}_k} s^R_k
		+ {\sum}_{k'=1,k'\neq k}^K  \sqrt{p^{\rm{R}}_{k'}{ (g^{\rm{RR}}_{k',k} + g^{\rm{RTR}}_{k',k}) }}s^R_{k'} \\
		+ \sqrt {{p_0}h^{\rm{CR}}_k} {s_0} + {\sum}_{q = 1}^Q {\sqrt {{p_q}h^{\rm{CR}}_k}} {s_q} + {n^{\rm{R}}_{k}},
	\end{multline*}
	where $h^{\rm{R}}_{k}$ is the round-trip channel gain of radar $k$, i.e., the channel gain of the indirect link of the TX-target-RX of radar $k$, $g^{\rm{RR}}_{k',k}$ is the channel gain of the direct link from the TX of radar $k'$ to the RX of radar $k$, $g^{\rm{RTR}}_{k',k}$ is the channel gain of the indirect link of the TX of radar $k'$, the target and the RX of radar $k$, $h^{\rm{CR}}_k$ is the channel gain between the BS and radar $k$, and ${n^{\rm{R}}_{k}}\sim \mathcal{N}(0,\sigma^2_k)$ is the noise at the radar.
	
	The channels involved in the RSMA-based CRC system are assumed to be quasi-static fading channel models that are defined as follows~\cite{shi2018non}, \cite{zhang2021design}:
	
	\begin{subequations} 
		\begin{align}
			\nonumber
			{}& \begin{cases} 
				\vspace{4pt}
				h^{\rm{R}}_{k}=\frac{G_{k,t}^{\rm{R}} G_{k,r}^{\rm{R}}\sigma_k^{\rm{RCS}}\lambda_c^2}{(4\pi)^3 (d_k^{\rm{R}})^4} \\
				\vspace{4pt}
				h^{\rm{CR}}_{k}=\frac{G_{t}^{\rm{C}} G_{k,r}^{'\rm{R}}\lambda_c^2  }{(4\pi)^2 (d_{k}^{\rm{CR}})^2} \hat{h}^{\rm{CR}}_{k} \\
				\vspace{4pt}
				h^{\rm{C}}_{q}=\frac{G_{t}^{\rm{C}} G_{q}\lambda_c^2}{(4\pi)^2 (d_{q})^2}\hat{h}^{\rm{C}}_{q}\\
				\vspace{4pt}
				g^{\rm{RTR}}_{k',k}=\frac{G_{k,t}^{\rm{R}} G_{k',r}^{\rm{R}}\sigma_{k,k'}^{\rm{RCS}}\lambda_c^2}{(4\pi)^3 (d_k^{\rm{R}})^2(d_{k'}^{\rm{R}})^2}\\
				\vspace{4pt}
				g^{\rm{RR}}_{k',k}=\frac{G_{k',t}^{'\rm{R}} G_{k,r}^{'\rm{R}}\lambda_c^2}{(4\pi)^2 (d_{k',k}^{\rm{RR}})^2}\hat{g}^{\rm{RR}}_{k',k}
				\\
				\vspace{4pt}
				g^{\rm{RC}}_{k,q}=\frac{G_{k,t}^{'\rm{R}} G_{q}\lambda_c^2}{(4\pi)^2 (d_{k,q}^{\rm{RQ}})^2}\hat{g}^{\rm{RC}}_{k,q} 
			\end{cases} 
		\end{align}
	\end{subequations}
	where $\hat{h}^{\rm{CR}}_{k}$, $\hat{h}^{\rm{C}}_{q}$, $\hat{g}^{\rm{RR}}_{k,'k}$, and $\hat{g}^{\rm{RC}}_{k,q}$ are the small-scale channels between the BS and radar $k$, the BS and CU $q$, radar $k'$ and radar $k$, and radar $k$ and CU $q$, respectively, $G_{k,t}^{\rm{R}}$ and $G_{k,r}^{\rm{R}}$ is the main-lobe antenna gains of the TX and RX of radar $k$, respectively, $G_{k,t}^{'\rm{R}}$ and $G_{k,r}^{'\rm{R}}$ are the side-lobe antenna gain of the TX and RX of radar $k$, respectively, $G_{t}^{\rm{C}}$ is the transmitting antenna gain of the BS, $G_{q}$ is the receiving antenna gain of CU $q$,  $\sigma_k^{\rm{RCS}}$ is the radar cross section (RCS) of the target with respect to radar $k$, $\sigma_{k,k'}^{\rm{RCS}}$ is the RCS of the target from radar $k$ to radar $k'$, $d_k^{\rm{R}}$, $d_{k',k}^{\rm{RR}}$, $d_{k,q}^{\rm{RQ}}$, $d_q$, $d_{k}^{\rm{CR}}$ are the distances from radar $k$ to its tracking target, radar $k$ to radar $k'$, radar $k$ to CU $q$, BS to CU $q$, BS to radar $k$, respectively. 
	
	\subsection{Optimization Problem}
	We first determine the data rate achieved by each CU. With the RSMA, the data rate achieved by each CU is the sum of common data rate and private data rate. Given a power vector $\bp=(p_0,p_1,\ldots,p_Q,p^{\rm{R}}_1,\ldots,p^{\rm{R}}_K)$, the common data rate and the private data rate of user CU $q$ are defined as functions of $\bp$ as follows~\footnote{For ease of notation, we use the binary logarithm in $\log(\cdot)$.}
	\begin{equation*}\label{eq:rate_comm-1}
		R^{\rm{C}}_{q,0}(\bp)= B\log \left( 1 + \frac{h^{\rm{C}}_q p_0}{h^{\rm{C}}_q {\sum}_{q' = 1}^Q p_{q'} +   {\sum}_{k=1}^{K} g^{\rm{RC}}_{k,q} p^{\rm{R}}_k + \sigma^2_q} \right),
	\end{equation*}
	and 
	\begin{equation*}\label{eq:rate_private-1}
		R^{\rm{C}}_{q}(\bp)= B\log  \left(  1 + \frac{h^{\rm{C}}_qp_q}{ h^{\rm{C}}_q{\sum}_{{q' = 1, q'\neq q}}^Q p_{q'} +   {\sum}_{k=1}^{K} g^{\rm{RC}}_{k,q}p^{\rm{R}}_k + \sigma^2_q} \right).
	\end{equation*}
	
	Denote $a_{q}$ as the common data rate allocated to CU $q$. Then, we have the following constraint
	
	\begin{equation}
		{\sum}_{q=1}^Qa_{q} \leq \min \limits_{q\in\cQ} \left\{R^{\rm{C}}_{q,0}(\bp)\right\}.
	\end{equation}
	The total data rate, denoted by $C^{\rm{C}}_q$, achieved by CU $q$ is given by
	\begin{equation}
		C^{\rm{C}}_q= a_{q}+ R^{\rm{C}}_{q}(\bp).
	\end{equation}
	
	As the communication system shares spectrum with the radar systems, there is a constraint on the interference caused by the radar systems to the CUs, which can be expressed by the data rate requirements of CUs as follows:
	\begin{equation}
		C^{\rm{C}}_q =a_{q}+ R^{\rm{C}}_{q}(\bp)\ge C^{\rm{TH}}_q, \quad \forall q\in \mathcal{Q},
	\end{equation}where $C^{\rm{TH}}_q$ is the minimum rate requirement of  CU $q$. It is worth noting that, by taking these constraints into account, a
		specific level of the desired quality of fairness could be achieved, while maximizing the total data rate of users.
	
	We denote $\vartheta^{\rm{R}}_{k}(\bp)$ as the SINR at the RX of radar $k$. Then, $\vartheta^{\rm{R}}_{k}(\bp)$ is a function of the power vector $\bp$, and defined by~\cite{shi2018non}, \cite{zhang2021design}
	\[
	\frac{ {h^{\rm{R}}_k} p^{\rm{R}}_k}
	{\sum_{{k'=1,k'\neq k}}^{K} \left(g^{\rm{RR}}_{k',k} + c_{k',k} g^{\rm{RTR}}_{k',k} \right)p^{\rm{R}}_{k'} +  h^{\rm{CR}}_k \left( \sum_{q = 0}^Q {{p_q}} \right)  + \sigma_k^2}.
	\]
	To guarantee the tracking performance, the SINR at each radar must be larger than a threshold, which is denoted by $\gamma^R$. Thus, we have the condition for the radars as $\vartheta^{\rm{R}}_{k}(\bp) \geq \gamma^R$.
	
	We then formulate an optimization problem, namely transmission rate maximization problem ($\trmp$) for the communication system, subject to the data rate requirements of the CUs and the SINR requirements of the radars as follows:		
	\begin{subequations}\label{prob:main}
		\begin{align}
			\max \limits_{\ba,\bp } & \quad {\sum}_{q=1}^Q a_{q} +{\sum}_{q=1}^Q R^{\rm{C}}_q(\text{\bf p}) \tag{\ref{prob:main}}\\
			{\text{s.t.}} & \quad 	{\sum}_{q=1}^Qa_{q} \leq \min \limits_{q\in\cQ} \left\{R^{\rm{C}}_{q,0}(\bp)\right\},\label{eq:min-R-q-0}\\
			&\quad 	a_{q}+ R^{\rm{C}}_{q}(\bp)\ge C^{\rm{TH}}_q, \hspace{2cm} \forall q\in \mathcal{Q},\label{eq:min-rate-user}\\
			&\quad p_0+ {\sum}_{q=1}^Q p_q \le {\bar{p}}^{\rm{C}},\label{eq:total-power}\\
			&\quad \vartheta_k(\bp)\ge \gamma^{\rm{R}}, \hspace{3.2cm} \forall k \in {\cal{K}},\label{eq:SINR}\\
			&\quad p_{k}^{\rm{R}} \le {\bar{p}}^{\rm{R}}, \hspace{3.7cm} \forall k \in {\cal{K}},\label{eq:radar-power}\\
			&\quad a_{q},p_0,p_q,p_k^{\rm{R}}\ge 0, \hspace{1.2cm} \forall q\in \mathcal{Q},\forall k \in {\cal{K}},
		\end{align}
	\end{subequations}
	where $\text{\bf a} = (a_{q})_{q\in \mathcal{Q}}$ and $\bp=(p_0,\bp^{\rm{U}},\bp^{\rm{R}})$ with $\bp^{\rm{U}}=(p_q)_{q\in \mathcal{Q}}$,  $\bp^{\rm{R}}=(p_k^{\rm{R}})_{k\in\mathcal{K}}$, $\gamma^{\rm{R}}$ is the minimum performance requirement for the radars, ${\bar{p}}^{\rm{R}}$ is the power budget at the radars, and ${\bar{p}}^{\rm{C}}$ is the power budget of the amplifiers of the BS.
	
	Note that by denoting
	\[
	\tilde{g}_{k',k}=\frac{g^{\rm{RR}}_{k',k}}{h^{\rm{R}}_k} + c_{k',k} \frac{g^{\rm{RTR}}_{k',k}}{h^{\rm{R}}_k},\,\tilde{h}_{k}=\frac{h^{\rm{CR}}_{k}}{h^{\rm{R}}_k},\, \tilde\sigma_k =\frac{\sigma^2_k}{h^{\rm{R}}_k},
	\]
	the constraint in (\ref{eq:SINR}) can be written in the linear form as:
	\begin{equation*}
		\frac{p^{\rm{R}}_k}{\gamma^{\rm R}}    -{\sum}_{k'=1,k'\neq k}^{K} \tilde{g}_{k',k}p^{\rm{R}}_{k'} -\tilde h_k {\sum}_{q = 0}^Q {{p_q}}  \ge   \tilde\sigma_k,\quad \forall\, k\in\cK.
	\end{equation*}
	To simplify the presentation of the problem model later on, let $\cX$ denotes the set of all points $\bp$ that are feasible to the linear constraints of~(\ref{prob:main}), and we can assume\footnote{Checking whether $\cX$ is empty is equivalent to checking the feasibility of a system of linear function, which can be done in polynomial time.} without loss of generality that $\cX \neq \emptyset$.
	
	In general, the transmission rate maximization problem $\trmp$ given in (\ref{prob:main}) is nonconvex due to the nonconcavity of the objective function and the nonconvexity of the constraints (\ref{eq:min-R-q-0}) and (\ref{eq:min-rate-user}), which make the problem hard to solve. To solve the problem, we propose two algorithms, i.e., an iterative SQP algorithm and an additive approximation scheme (AAS), which are presented in Sections~\ref{sec:iter_algorithm} and~\ref{sec:exact_algorithm}, respectively.
	
	\section{A Local Optimization Algorithm}\label{sec:iter_algorithm}
	In section, we present the use of the SQP algorithm~\cite{boggs1995sequential}, which is one of the most powerful algorithmic tools for the numerical solution of large-scale nonlinear optimization problems, to solve problem~$\trmp$. In brief, SQP is an iterative algorithm which models the problem for a given iterate $(\ba^{\ell},\bp^{\ell})$, $\ell\in \mathbb{N}_0$, by a Quadratic Programming (QP) subproblem, solves that QP subproblem, and then uses the solution to construct a new iterate $(\ba^{\ell+1},\bp^{\ell+1})$. This construction is implemented in a way that the sequence $(\ba^{\ell},\bp^{\ell})_{\ell\in \mathbb{N}_0}$ converges to a local minimum $(\ba^{*},\bp^{*})$ of the problem as $\ell$ approaches infinity. Note that a major advantage of SQP is that the iterates $(\ba^{\ell},\bp^{\ell})$ need not to be feasible points, since the computation of feasible points in case of nonlinear nonconvex constraint functions may be as difficult as the optimal solution of the problem itself.
	
	For ease of presentation, let us transform the problem given in (\ref{prob:main}) into an equivalent minimization problem, in which we aim to minimize the function
	\[
	\cF(\ba,\bp) = - {\sum}_{q=1}^Q a_{q} - {\sum}_{q=1}^Q R^{\rm{C}}_q(\text{\bf p}),
	\]
	where $(\ba,\bp)\in \mathbb{R}_+^{2Q+K+1}$ is a vector of nonnegative variables, subject to the same set of constraints as in (\ref{prob:main}). Before going to define the Lagrangian function for the problem, we replace the constraint (\ref{eq:min-R-q-0}) by the following constraints.
	\begin{align*}
		\begin{cases} 
			\vspace{4pt}
			\sum_{q=1}^Q a_q\le \Omega\\
			\vspace{4pt}
			\Omega\le R^{\rm{C}}_{q,0}(\bp), \quad \forall~ q\in\cQ.
		\end{cases} 
	\end{align*}
	The Lagrangian function $L(\ba,\bp)$ is then defined as
	\begin{multline*}
		L(\ba,\bp)=	\cF(\ba,\bp) + 	\theta \left(\sum_{q=1}^Qa_{q} - \Omega\right)	+ \sum_{q=1}^Q \sigma_q\left( \Omega -  R^{\rm{C}}_{q,0}(\bp)\right) 
		\\
		+ \sum_{q\in\cQ} \lambda_q\left( C^{\rm{TH}}_q -a_{q}- R^{\rm{C}}_{q}(\bp) \right) + \mu \left({\bar{p}}^{\rm{C}} -p_0- \sum_{q=1}^Q p_q \right)\\
		+ \pmb\xi^T(\pmb\gamma-\vartheta(\bp))+  \pmb\zeta^T(\bp^{\rm{R}}-\bar\bp^{\rm{R}})
		- \pmb\kappa^T\ba - \pmb\eta^T\bp, 
	\end{multline*}
	where $\pmb\gamma=(\gamma^{\rm{R}},\ldots,\gamma^{\rm{R}})$, $\bar\bp^{\rm{R}}=(\bar p^{\rm{R}},\ldots,\bar p^{\rm{R}})$ and $\vartheta(\bp)=(\vartheta_1(\bp),\ldots,\vartheta_K(\bp))$ are vectors of dimension $K$. In addition, $\mu,\theta,\lambda_q,\sigma_q,\pmb\xi, \pmb\zeta,\pmb \kappa,\pmb\eta$ for $q\in\cQ,$ are Lagrange multipliers.
	
	The QP subproblem, denoted as QP$^{(\ell)}$, we will solve in iteration step $\ell$ is defined in such a way  that it should
	reflect the local properties of the problem with respect to the current iterate $(\ba^{\ell},\bp^{\ell})$. In fact,  we replace the objective function $\cF$ by its local quadratic approximation:
	\[
	\cF(\bs)= \cF(\ba^{\ell},\bp^{\ell}) + \nabla \cF(\ba^{\ell},\bp^{\ell})^T \bs +\frac{1}{2} \bs^T \nabla^2L(\ba^{\ell},\bp^{\ell}) \bs,
	\]
	and the nonlinear constraint in (\ref{prob:main}) by their local affine approximations
	\begin{align*}
		g_q(\ba^{\ell},\bp^{\ell}) +\nabla g_q(\ba^{\ell},\bp^{\ell}) ^T \bs \le &\, 0,\\
		h_q(\ba^{\ell},\bp^{\ell}) +\nabla g_q(\ba^{\ell},\bp^{\ell}) ^T \bs \le &\, 0,
	\end{align*}
	where for all $q\in\cQ$, 
	\begin{align*}
		g_q(\ba,\bp) = &\, {\sum}_{q=1}^Qa_{q} - R^{\rm{C}}_{q,0}(\bp) \le 0,\\
		h_q(\ba,\bp) = &\, C^{\rm{TH}}_q - a_{q} - R^{\rm{C}}_{q}(\bp)\le  0.
	\end{align*}
	Note that all the linear constraints in (\ref{prob:main}) remain unchanged. Our goal is to find a feasible solution $\bs\in\mathbb{R}_+^{2Q+K+1}$ that minimizes $\cF(\bs)$.
	
	The quadratic objective function $\cF(\bs)$ of the QP subproblem is nonconvex in general. Hence, the idea is to approximate the Hessian $\nabla^2L(\ba^{\ell},\bp^{\ell})$ 
	by a positive definite matrix, and this can be done using the standard Broyden-Fletcher-Goldfarb-Shanno (BFGS) approximation 
	\cite{head1985broyden}. The resulting objective function is therefore convex and thus the QP subproblem can be solved efficiently to attain the search direction $(\ba^{\ell+1},\bp^{\ell+1})=(\ba^{\ell},\bp^{\ell}) + \alpha_\ell\bs^\ell$, where the step length parameter $\alpha_\ell$ can be evaluated by using, e.g line search algorithms, to perform the following one-dimensional optimization.
	\begin{equation}\label{prob:direct-search}
		\alpha_\ell = \arg\min_{\alpha>0} \cF((\ba^{\ell},\bp^{\ell}) + \alpha \bs^\ell). 
	\end{equation}
	Generally, a line search algorithm aims at solving the problem locally, based on an iterative approach that seeks for a local minimum solution (for example, gradient descent or Newton method). Herein, we make use of the fractional programming approach presented in \cite{ShenY18}, where it is showed that, in practice, this approach has lower complexity than gradient descent and Newton methods on
	per-iteration basis. 
	
	Since $\bs^\ell\in\mathbb{R}_+^{2Q+K+1}$, one can write it as  
	\[
	\bs^\ell= (s_1^\ell,\ldots,s_Q^\ell,s_0^\ell,s_{Q+1}^\ell,\ldots,s_{2Q}^\ell,s_{2Q+1}^\ell,\ldots,s_{2Q+K}^\ell).
	\]
	Also, one can write $(\ba^{\ell},\bp^{\ell})$ as
	\[
	(\ba^{\ell},\bp^{\ell})= (a_1^\ell,\ldots,a_Q^\ell, p_0^\ell,p_1^\ell,\ldots,p_Q^\ell,p^{\ell,\rm{R}}_1,\ldots,p^{\ell,\rm{R}}_K).
	\]
	The function $\cF((\ba^{\ell},\bp^{\ell}) + \alpha \bs^\ell)$ is now written as a function $\cG$ of single variable $\alpha$.
	\begin{align*}
		\cG(\alpha)=  -{\sum}_{q=1}^Q \left(a_{q}^\ell+\alpha s_q^\ell + B\log  \left( \dfrac{f_q(\alpha)}{g_q(\alpha)} \right)\right),
	\end{align*}
	where $f_q(\alpha)$ and $g_q(\alpha)$ are respectively defined as
	\begin{align*}
		h^{\rm{C}}_q\sum_{q'=1}^Q (p_{q'}^\ell+\alpha s_{Q+q'}^\ell)  +   \sum_{k=1}^{K} g^{\rm{RC}}_{k,q}(p^{\ell,\rm{R}}_k+\alpha s_{2Q+k}^\ell) + \sigma^2_q,
	\end{align*}
	and  
	\begin{align*}
		h^{\rm{C}}_q\sum_{{q' = 1, q'\neq q}}^Q (p_{q'}^\ell+\alpha s_{Q+q'}^\ell) +   \sum_{k=1}^{K} g^{\rm{RC}}_{k,q}(p^{\ell,\rm{R}}_k+\alpha s_{2Q+k}^\ell) + \sigma^2_q,
	\end{align*}
	which are linear functions of $\alpha$. We further write the functions $f_q(\alpha)$ and $g_q(\alpha)$ respectively as
	\begin{align*}
		f_q(\alpha)=	\alpha \cdot V_q + W_q, \quad \text{and}\quad g_q(\alpha)=\alpha \cdot V'_q + W'_q,
	\end{align*}
	where 
	\begin{align*}
		V_q=~& h^{\rm{C}}_q{\sum}_{q'=1}^Q s_{Q+q'}^\ell + {\sum}_{k=1}^{K} g^{\rm{RC}}_{k,q} s_{2Q+k}^\ell+ \sigma^2_q\\
		W_q=~& h^{\rm{C}}_q{\sum}_{q'=1}^Q p_{q'}^\ell +  {\sum}_{k=1}^{K} g^{\rm{RC}}_{k,q}p^{\ell,\rm{R}}_k + \sigma^2_q\\
		V'_q=~& h^{\rm{C}}_q{\sum}_{{q' = 1, q'\neq q}}^Q s_{Q+q'}^\ell + {\sum}_{k=1}^{K} g^{\rm{RC}}_{k,q} s_{2Q+k}^\ell+ \sigma^2_q\\
		W'_q=~& h^{\rm{C}}_q{\sum}_{{q' = 1, q'\neq q}}^Q p_{q'}^\ell + {\sum}_{k=1}^{K} g^{\rm{RC}}_{k,q} s_{2Q+k}^\ell+ \sigma^2_q,
	\end{align*}
	
	Now solving the problem (\ref{prob:direct-search}) is equivalent to solving the following problem
	\begin{align}\label{prob:direct-search1}
		\max \limits_{\alpha>0} \quad \quad \sum_{q=1}^Q \left(a_{q}^\ell+\alpha s_q^\ell + B\log\left( \dfrac{\alpha  V_q + W_q}{\alpha  V'_q + W'_q} \right)\right).
	\end{align}
	The problem (\ref{prob:direct-search1}) can be transformed into an equivalent formulation, which can be then amenable to iterative optimization.
	\begin{theorem}\label{th:equivalent}
		The problem (\ref{prob:direct-search1}) is equivalent to the problem
		{\small
			\begin{align}\label{prob:direct-search2}
				\max_{\alpha>0}	~\sum_{q=1}^Q \left( a_{q}^\ell +\alpha s_q^\ell  +B\log\left( 2\beta_q\sqrt{\alpha  V_q + W_q} -\beta_q^2(\alpha  V'_q + W'_q) \right)\right).
			\end{align}
		}
		over the domain $\{(\alpha,\beta_1,\ldots,\beta_q)\in \mathbb{R}_{>0}\times \mathbb{R}^q\}$.
	\end{theorem}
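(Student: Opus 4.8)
The plan is to recognize the expression inside the logarithm, $\frac{\alpha V_q+W_q}{\alpha V'_q+W'_q}$, as a ratio whose numerator is a \emph{square root} of a linear function and whose denominator is linear, and then apply the standard quadratic-transform (fractional programming) trick, as in~\cite{ShenY18}: for any $y>0$ one has
\[
\sqrt{y}=\max_{\beta\in\mathbb{R}}\left(2\beta\sqrt{y}-\beta^2 y\right)\,?\quad\text{--- no; the correct identity is }\ \frac{u}{v}=\max_{\beta}\bigl(2\beta u-\beta^2 v\bigr)
\]
when $u\ge 0$ and $v>0$, with the maximizer $\beta^\star=u/v$. The trick here is slightly more subtle because the argument of the logarithm in~(\ref{prob:direct-search1}) is $\frac{\alpha V_q+W_q}{\alpha V'_q+W'_q}$, not a ratio of a concave numerator to an affine denominator; so first I would rewrite it as $\sqrt{\alpha V_q+W_q}\cdot\frac{\sqrt{\alpha V_q+W_q}}{\alpha V'_q+W'_q}$. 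Hmm --- more cleanly: observe that $\log\frac{A}{B}$ with $A=\alpha V_q+W_q$, $B=\alpha V'_q+W'_q$ can be written as $\log\frac{(\sqrt A)^2}{B}$, and then apply the quadratic transform to the inner fraction $\frac{(\sqrt A)^2}{B}=\frac{(\sqrt A)^2}{B}$, whose numerator $(\sqrt A)^2$ is \emph{not} what we substitute --- rather, we treat $\sqrt A$ as the ``numerator'' $u$ and $B$ as the ``denominator'' $v$ in the identity $\frac{u^2}{v}=\max_\beta(2\beta u-\beta^2 v)$. This gives exactly $\frac{A}{B}=\max_{\beta_q}\bigl(2\beta_q\sqrt{\alpha V_q+W_q}-\beta_q^2(\alpha V'_q+W'_q)\bigr)$, which is the bracketed term in~(\ref{prob:direct-search2}).

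The key steps, in order, would be: \textbf{(i)} Recall and state the Lagrangian-type quadratic-transform identity: for $u\ge 0$, $v>0$, $\frac{u^2}{v}=\max_{\beta\in\mathbb{R}}\bigl(2\beta u-\beta^2 v\bigr)$, with the optimum attained at $\beta^\star=u/v$. This is an elementary one-variable concave maximization. \textbf{(ii)} Apply this with $u=\sqrt{\alpha V_q+W_q}$ and $v=\alpha V'_q+W'_q$ for each $q$, noting that feasibility ($\alpha>0$ together with the sign structure of $V_q,W_q,V'_q,W'_q$, which are positive combinations of nonnegative powers plus the positive noise variance $\sigma_q^2$) guarantees both arguments are positive, so the identity is valid pointwise in $\alpha$. \textbf{(iii)} Substitute this representation of the ratio inside each $\log$ in~(\ref{prob:direct-search1}); since $\log$ is monotincreasing, the inner $\max$ over $\beta_q$ commutes with $\log$, i.e.\ $\log\bigl(\max_{\beta_q}(\cdots)\bigr)=\max_{\beta_q}\log(\cdots)$. \textbf{(iv)} Because the objective in~(\ref{prob:direct-search1}) is a \emph{sum} of such terms and the auxiliary variables $\beta_1,\ldots,\beta_Q$ are independent across the summands (each $\beta_q$ appears only in the $q$-th term), the joint maximization $\max_{\alpha>0}\max_{\beta_1,\ldots,\beta_Q}$ of the sum equals $\max_{\alpha>0}$ of the original sum. \textbf{(v)} Conclude that~(\ref{prob:direct-search1}) and~(\ref{prob:direct-search2}) have the same optimal value and that any optimizer $(\alpha^\star,\beta_1^\star,\ldots,\beta_Q^\star)$ of~(\ref{prob:direct-search2}) projects to an optimizer $\alpha^\star$ of~(\ref{prob:direct-search1}), and conversely $\alpha^\star$ together with $\beta_q^\star=\sqrt{\alpha^\star V_q+W_q}/(\alpha^\star V'_q+W'_q)$ optimizes~(\ref{prob:direct-search2}); this is precisely the meaning of ``equivalent'' intended in the statement.

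I would expect the main obstacle to be purely a matter of \emph{bookkeeping the sign conditions} rather than anything deep: one must verify that for all admissible $\alpha$ the quantities $\alpha V_q+W_q$ and $\alpha V'_q+W'_q$ are strictly positive so that $\sqrt{\cdot}$ is real and the quadratic transform applies, and one must be careful that the domain stated in the theorem, $\{(\alpha,\beta_1,\ldots,\beta_q)\in\mathbb{R}_{>0}\times\mathbb{R}^q\}$, lets each $\beta_q$ range over all of $\mathbb{R}$ (the unconstrained maximizer $u/v$ is positive here, so restricting to $\beta_q>0$ would also work, but $\mathbb{R}$ is harmless). A secondary subtlety worth a sentence is that the equivalence is an identity of optimization problems in the sense of equal optimal values and a bijective-up-to-projection correspondence of optimal solutions, \emph{not} that the objectives agree pointwise --- they agree only after the inner maximization over $\beta$; I would make this explicit so the reader understands why~(\ref{prob:direct-search2}) is useful (it is jointly amenable to alternating optimization: fix $\alpha$ and update each $\beta_q$ in closed form, then fix $\beta$ and the resulting problem in $\alpha$ is a concave maximization of a sum of logarithms of concave functions). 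No heavy computation is needed; the proof is essentially three or four lines once the identity in step~(i) is invoked.
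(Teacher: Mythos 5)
Your argument is exactly the quadratic transform of Shen and Yu that the paper invokes (the proof is omitted there with a citation to \cite{ShenY18}), including the correct maximizer $\beta_q^\star=\sqrt{\alpha V_q+W_q}/(\alpha V'_q+W'_q)$ matching the closed form in (\ref{eq:closed-form}). The reconstruction is correct and takes essentially the same route as the paper.
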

	Since Theorem~\ref{th:equivalent} directly follows from \cite{ShenY18}, the proof of Theorem~\ref{th:equivalent} is omitted. 
	
	We follow the iterative approach to find an optimal solution $\alpha$. First, it is observed that when $\alpha$ is fixed, the optimal $\beta_1,\ldots,\beta_q$ can be determined in closed form as
	\begin{equation}\label{eq:closed-form}
		\beta_q^*=\dfrac{\sqrt{\alpha  V_q + W_q}}{\alpha  V'_q + W'_q},\quad \forall~q\in\cQ.
	\end{equation} 
	On the other hand,  when all $\beta_1,\ldots,\beta_q$ are fixed, the resulting problem is convex  in variable $\alpha$ and thus it can be solved efficiently through numerical convex optimization (e.g., the gradient decent method). As consequence, the complexity of Algorithm~\ref{fractional-programming} is $O(D/\epsilon)$, where $\epsilon$ is a given additive error for solutions $\alpha$ found at every iteration,  and $D$ is the number of iterations.
	\begin{algorithm}[!htb]
		\caption{Fractional Programming (FP)} \label{fractional-programming}
		\begin{algorithmic}[1]
			\Require an accuracy parameter $\epsilon>0$.
			\Ensure a solution $\alpha$
			\State Initiate a solution $\alpha$ of (\ref{prob:direct-search2})
			\Repeat
			\State Update $\beta_1,\ldots,\beta_q$ by (\ref{eq:closed-form})
			\State Update $\alpha$ by solving the convex problem obtained from (\ref{prob:direct-search2}) by fixing $\beta_1,\ldots,\beta_q$ achieved in the previous step \label{step:gradient}
			\Until{Convergence}
			\State \Return $\alpha$ 
		\end{algorithmic}
	\end{algorithm}
	
	Using a similar argument as in \cite{ShenY18}, one can prove that Algorithm~\ref{fractional-programming} 
	consists of a sequence of convex optimization problems that
	converge to a stationary point of (\ref{prob:direct-search1}) with non-decreasing objective value after every iteration.

	The main steps of the SQP algorithm is outlined in Algorithm~\ref{sqp} below. It it easy to see that the complexity of the algorithm mainly depends on the execution time of the repeat-loop. In particular, we need $O((2Q+K+1)^{7/2}\cdot \cI^2)$ times to solve the convex relaxation of each QP subproblem using primal interior point method \cite{GoldfarbL91}, where $\cI$ denotes the length of the input data. In addition, finding a step length $\alpha$ takes $O(D/\epsilon)$ times, while the updating the approximate Hessian matrix costs $O((2Q+K+1)^2)$. Overall, the complexity of  Algorithm~\ref{sqp} is
	\[
	O((2Q+K+1)^{7/2}\cdot \cI^2 + D/\epsilon).
	\]
	\begin{algorithm}[!htb]
		\caption{Sequential Quadratic Programming (SQP)} \label{sqp}
		\begin{algorithmic}[1]
			\Require an accuracy parameter $\epsilon>0$.
			\Ensure a solution $(\ba^{\ell},\bp^{\ell})$
			\State Initiate a solution $(\ba^{0},\bp^{0})$
			\State Compute a positive definite matrix $\pmb{H}_0$ as an approximation of the Hessian $\nabla^2L(\ba^{0},\bp^{0})$
			\State $\ell\leftarrow 0$
			\Repeat
			\State Solve a QP subproblem QP$^{(\ell)}$ to attain a solution $\bs^\ell$
			\State Find a step length $\alpha_\ell$ using Algorithm~\ref{fractional-programming}
			\[
			\alpha_\ell \leftarrow \arg\min_{\alpha>0} \cF((\ba^{\ell},\bp^{\ell}) + \alpha \bs^\ell) 
			\] 
			\State Update 
			$$(\ba^{\ell+1},\bp^{\ell+1}) \leftarrow (\ba^{\ell},\bp^{\ell}) + \alpha_\ell\bs^\ell$$
			\State Update the approximate Hessian $\pmb{H}_{\ell+1}$
			\begin{align*}
				\bu^\ell \leftarrow &\, (\ba^{\ell+1},\bp^{\ell+1}) - (\ba^{\ell},\bp^{\ell}) \\
				\bv^\ell \leftarrow &\, \nabla \cF (\ba^{\ell+1},\bp^{\ell+1}) - \nabla \cF (\ba^{\ell},\bp^{\ell})\\
				\pmb{H}_{\ell+1} \leftarrow &\,  \pmb{H}_{\ell} +\frac{\bv^\ell (\bv^\ell)^T}{(\bv^\ell)^T\bu^\ell} - \frac{\pmb{H}_{\ell} \bu^\ell (\bu^\ell)^T \pmb{H}_{\ell}^T}{(\bu^\ell)^T \pmb{H}_{\ell} \bu^\ell}
			\end{align*}
			\State $\ell\leftarrow \ell+1$
			\Until{Convergence or $\ell=500$}
			\State \Return $(\ba^{\ell},\bp^{\ell})$ 
		\end{algorithmic}
	\end{algorithm}

	\section{A Global Optimization Algorithm}
	\label{sec:exact_algorithm}
	Like most local optimization algorithms, the SQP algorithm as presented in Section~\ref{sec:iter_algorithm} also has a drawback that it can only locate a local optimum, which may be far from a global optimum. In this section, we present an algorithm that can solve problem $\trmp$ globally. The following theorem summarizes main properties of the algorithm.
	\begin{theorem}
		\label{th:exact-alg}
		For a given $\delta>0$, there is an algorithm that can produce a near optimal solution to the problem $\trmp$ with an additive error of $\delta\cdot Q$, where $Q$ denotes the number of users. The running time of the algorithm is exponential in $Q$.
	\end{theorem}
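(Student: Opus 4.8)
The plan is to reduce the nonconvex problem $\trmp$ to a polynomial number of convex feasibility subproblems by discretizing the troublesome nonconcave/nonconvex "interference" terms, and then solving each subproblem exactly. The observation to exploit is that once the aggregate interference levels seen by each CU $q$ and each radar $k$ are fixed, the problem decouples: the rate functions $R^{\rm C}_{q,0}(\bp)$ and $R^{\rm C}_q(\bp)$ become concave in the remaining free variables, and the constraints (\ref{eq:min-R-q-0})--(\ref{eq:min-rate-user}) become convex. Concretely, I would introduce auxiliary variables $t_q = h^{\rm C}_q\sum_{q'} p_{q'} + \sum_k g^{\rm RC}_{k,q} p^{\rm R}_k + \sigma^2_q$ (and analogously for the private-rate denominators), so that each $R$-term is of the form $B\log(1 + (\text{linear})/t)$, which is jointly concave in (numerator, $t$). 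The set $\cX$ of points feasible for the linear constraints (\ref{eq:total-power}), (\ref{eq:SINR}), (\ref{eq:radar-power}) is a polytope, and over $\cX$ these auxiliary variables range over a bounded interval whose endpoints are computable by linear programming.

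First I would establish a bound: since $p_0 \le \bar p^{\rm C}$ and the power budgets are finite, each per-user common rate $R^{\rm C}_{q,0}(\bp)$ and private rate $R^{\rm C}_q(\bp)$ is bounded above by some $R_{\max}$ depending only on the channel gains and power budgets; hence the optimal objective lies in $[0, 2Q R_{\max}]$, and more importantly each individual rate term lies in a bounded range. Second, I would build a grid on the interference/denominator parameters with spacing chosen so that perturbing any denominator $t$ within one grid cell changes the corresponding $B\log(1+\cdot/t)$ term by at most $\delta$; because $B\log(1+x/t)$ is Lipschitz in $\log t$ on the relevant compact range, a geometric grid with $O((1/\delta)\log(\text{range}))$ points per coordinate suffices, giving $(O(\frac1\delta \log \tfrac1\delta))^{O(Q+K)}$ grid points in total — exponential in $Q$ (and $K$), as claimed. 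Third, for each grid point I would solve the convex program obtained by (i) fixing the discretized denominators to the nearest grid value, (ii) adding the linear consistency constraints tying the fixed values to the actual power variables (as inequalities in the "safe" direction, so feasibility is preserved), and (iii) maximizing the now-concave objective over $\cX$ intersected with the convex rate constraints; this is solvable in polynomial time by interior-point methods. Finally I would return the best solution found over all grid points.

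For the correctness argument I would show two things. Feasibility: every solution returned is feasible for $\trmp$, because the discretized constraints are tightenings of the originals (this requires care in choosing the rounding direction for each of the $Q$ constraints (\ref{eq:min-R-q-0}), the $Q$ constraints (\ref{eq:min-rate-user}), and implicitly inside the objective). Near-optimality: take an optimal solution $(\ba^\star,\bp^\star)$ of $\trmp$; it induces denominator values $t^\star_q$; let $\hat t_q$ be the grid point just above $t^\star_q$ in each coordinate; then $(\ba^\star,\bp^\star)$ — possibly after a vanishingly small feasibility correction — is feasible for the subproblem indexed by $\hat t$, and by the Lipschitz/spacing choice its objective value in that subproblem is at least $\mathrm{OPT} - \delta \cdot Q$ (one $\delta$ loss per private-rate term in the objective; the common-rate allocation $\sum a_q$ is unaffected since the $a_q$ are free variables and only the feasible region for them shifts by $O(\delta)$ per constraint, which can be absorbed). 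Hence the best subproblem solution has value $\ge \mathrm{OPT} - \delta Q$, which is the stated additive error.

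The main obstacle I anticipate is the bookkeeping of the rounding directions so that simultaneously (a) the rounded feasible region is contained in the true feasible region (no infeasible output) and (b) the true optimum survives, up to $O(\delta)$ slack, in at least one rounded instance; these two requirements push in opposite directions, and reconciling them is exactly where one typically needs the extra $\delta$-slack and a small explicit perturbation of $(\ba^\star,\bp^\star)$ (e.g. scaling the private powers down by a factor $1-O(\delta)$ and re-optimizing $\ba$). A secondary technical point is verifying the joint concavity of $B\log(1 + (\text{linear in }\bp)/t)$ in all free variables after the substitution — this is standard (it is a perspective-type transformation of $\log(1+x)$), but it must be checked that the private-rate denominators, which themselves depend on $\bp$, are handled consistently with the common-rate denominators so that no circular dependency is introduced; fixing them on the grid is what breaks the circularity.
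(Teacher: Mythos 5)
Your proposal follows essentially the same strategy as the paper's Additive Approximation Scheme: geometrically discretize a set of $Q$ per-user quantities, exhaustively enumerate the resulting grid cells, solve a polynomial-time subproblem in each cell, and pay an additive $\delta$ per user via a Lipschitz/spacing argument. The only substantive difference is what gets discretized: the paper grids the private SINR ratios $\bar R^{\rm C}_q(\bp)$ directly (so the cell-membership constraints are ratios of linear functions, hence linear, and each subproblem reduces to an LP plus a binary search on the common rate, as in Lemma~\ref{lem:core}), whereas you grid the interference denominators and solve one convex program per cell; both yield the same error $\delta\cdot Q$ and exponential-in-$Q$ enumeration, though you need not grid any radar-side quantities since the constraints (\ref{eq:SINR}) are already linear, so the exponent should involve only $Q$, not $Q+K$. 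One side claim in your write-up is false: $B\log\left(1+x/t\right)$ is \emph{not} jointly concave in $(x,t)$ (its Hessian is indefinite); this is harmless here because your subproblems fix $t$ to a grid constant, after which concavity in the remaining power variables does hold, but the motivating "joint concavity" remark should be dropped or corrected.
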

	We prove Theorem \ref{th:exact-alg} by presenting an additive approximation scheme (AAS), which is formally defined as Algorithm~\ref{ptas}. The main idea of the algorithm is to solve the problem (\ref{prob:main}) approximately by transforming it into a sequence of subproblems, whose solution can be efficiently found.  Before giving a formal description of the algorithm, we first study the case where we know the optimal private data rate of all the users in advance, and then show how the algorithm AAS works without this assumption. Throughout this section we denote by $(\ba^*,\bp^*)$  an optimal solution to the problem (\ref{prob:main}). We also call $R^{\rm{C}}_q(\bp^*)$ the optimal private data rate of user $q$, for each $q\in Q$. 
	\subsection{Known Private Data Rates}
	In this section we present a polynomial-time algorithm to find $(\ba^*,\bp^*)$ given the optimal private data rates.
	\begin{lemma}
		\label{lem:core}
		Suppose that we are given optimal private data rates  $R^{\rm{C}}_1(\text{\bf p}^*),\ldots, R^{\rm{C}}_Q(\text{\bf p}^*)$, but not $\bp^*$. Then, the optimal solution $(\ba^*,\bp^*)$  can be found in polynomial time in the input size.
	\end{lemma}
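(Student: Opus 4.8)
The plan is to use the fact that \emph{freezing} the private data rates removes every nonconvexity from $\trmp$. Assume we are handed $r_q:=R^{\rm C}_q(\bp^*)$ for each $q\in\cQ$. The first step is to observe that clearing the logarithm in $R^{\rm C}_q(\bp)=r_q$ turns it into the \emph{linear} equality
\[
h^{\rm C}_q p_q=\bigl(2^{r_q/B}-1\bigr)\Bigl(h^{\rm C}_q{\textstyle\sum_{q'\neq q}}p_{q'}+{\textstyle\sum_{k}}g^{\rm RC}_{k,q}p^{\rm R}_k+\sigma^2_q\Bigr),
\]
so that $P:=\cX\cap\{\bp:\ R^{\rm C}_q(\bp)=r_q\ \forall q\in\cQ\}$ is a polyhedron. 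I would then argue that imposing the constraints $R^{\rm C}_q(\bp)=r_q$ on $\trmp$ is without loss of optimality: $(\ba^*,\bp^*)$ satisfies them by the definition of $r_q$, so the restricted optimum is no smaller than $\mathrm{OPT}(\trmp)$, and trivially no larger. (Note equality is essential here; with inequalities $R^{\rm C}_q(\bp)\ge r_q$ the objective would still contain the nonconcave terms $\sum_q R^{\rm C}_q(\bp)$.)

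On $P$ the objective $\sum_q a_q+\sum_q R^{\rm C}_q(\bp)$ equals $\sum_q a_q+\sum_q r_q$ (a constant plus $\sum_q a_q$), constraint (\ref{eq:min-rate-user}) becomes the linear bound $a_q\ge C^{\rm TH}_q-r_q$, and (\ref{eq:SINR}) is already linear; the only relation left that is not linear is (\ref{eq:min-R-q-0}), i.e.\ $\sum_q a_q\le\min_{q}R^{\rm C}_{q,0}(\bp)$. For a fixed $\bp$, the variables $a_q$ appear only through $a_q\ge L_q:=\max\{0,C^{\rm TH}_q-r_q\}$ and through $\sum_q a_q\le\min_q R^{\rm C}_{q,0}(\bp)$, so a prescribed total $t=\sum_q a_q$ is attainable iff $\sum_q L_q\le t$, and the best choice pushes $\sum_q a_q$ up to whatever (\ref{eq:min-R-q-0}) allows. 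Hence the restricted problem is equivalent to
\[
\max_{\bp\in P}\ \min_{q\in\cQ}R^{\rm C}_{q,0}(\bp)\quad\text{s.t.}\quad\min_{q\in\cQ}R^{\rm C}_{q,0}(\bp)\ge{\textstyle\sum_{q}}L_q,
\]
from which $\sum_q a_q$ is set to the optimal value and split as $a_q=L_q$ for all $q$ with the surplus $\min_q R^{\rm C}_{q,0}(\bp)-\sum_q L_q\ge 0$ added to any one $a_q$. The side constraint is automatically met for the true guess, since $\sum_q a^*_q\le\min_q R^{\rm C}_{q,0}(\bp^*)$ while $\sum_q a^*_q\ge\sum_q L_q$; and by the previous paragraph the value so obtained equals exactly $\mathrm{OPT}(\trmp)$.

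Finally I would solve $\max_{\bp\in P}\min_q R^{\rm C}_{q,0}(\bp)$. Writing $R^{\rm C}_{q,0}(\bp)=B\log\!\bigl(N_q(\bp)/D_q(\bp)\bigr)$ with $N_q,D_q$ affine in $\bp$ and $D_q(\bp)\ge\sigma^2_q>0$, and using monotonicity of $\log$, this equals $B\log\max_{\bp\in P}\min_q N_q(\bp)/D_q(\bp)$, a max--min of linear fractions over a polyhedron, i.e.\ a generalized linear--fractional program. For a scalar $\tau$, the condition $\min_q N_q(\bp)/D_q(\bp)\ge\tau$ is the linear system $N_q(\bp)\ge\tau D_q(\bp)$, $q\in\cQ$, so the largest feasible $\tau$ (together with a witnessing $\bp$, hence all the $a_q$) is found in polynomial time by bisection on $\tau$ with an LP feasibility oracle, exact termination following from rationality of the data; a Charnes--Cooper/Dinkelbach-type reformulation gives the same conclusion. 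This produces $(\ba^*,\bp^*)$ in polynomial time, which proves the lemma. The one genuinely nontrivial point is the last one: that the residual min--of--ratios problem, although not concave as written, falls precisely in the polynomially solvable class of generalized linear--fractional programs; everything else is bookkeeping.
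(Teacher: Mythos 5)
Your proof is correct and follows essentially the same route as the paper's: freeze the private rates so that $R^{\rm C}_q(\bp)=r_q$ becomes a linear equality, reduce the problem to maximizing $\min_{q}R^{\rm C}_{q,0}(\bp)$ over the resulting polyhedron via bisection on the threshold with an LP feasibility oracle, and then distribute the common rate by a trivial linear program. Your allocation $a_q=L_q=\max\{0,C^{\rm TH}_q-r_q\}$ is in fact slightly more careful than the paper's closed form, which omits the nonnegativity truncation.
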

	\begin{proof}
		Suppose that we know in advance the optimal private data rate  $\zeta_q=R^{\rm{C}}_q(\text{\bf p}^*)$, for every $q\in\cQ$. The problem (\ref{prob:main}) is reduced to the following problem:
		\begin{subequations}\label{prob:main-1}
			\begin{align}
				\max \limits_{\ba, \bp\in\cX} & \quad {\sum}_{q=1}^Q a_{q}  \tag{\ref{prob:main-1}}\\
				{\text{s.t.}} & \quad 	{\sum}_{q=1}^Qa_{q} \leq \min \limits_{q\in\cQ} \left\{R^{\rm{C}}_{q,0}(\bp)\right\},\label{eq:min-R-q-1}\\		
				& \quad a_{q}\ge C^{\rm{TH}}_q - \zeta_{q}, \quad \forall q\in \mathcal{Q},\\
				& \quad R^{\rm{C}}_q(\text{\bf p}) = \zeta_q, \quad \quad \, \, \, \forall q\in \mathcal{Q}.
			\end{align}
		\end{subequations}
		
		We prove that the problem (\ref{prob:main-1}) can be solved in polynomial time. It is not hard to check, by contradiction, that the common message constraint (\ref{eq:min-R-q-1}) holds with equality at an optimal solution of (\ref{prob:main-1}). Hence, solving (\ref{prob:main-1}) amounts to  find first an optimal solution $\hat\bp$ to the following max-min problem
		\begin{subequations}\label{prob:main-2}
			\begin{align}
				\max \limits_{\bp\in\cX} & \quad  \min \limits_{q\in\cQ} \left\{R^{\rm{C}}_{q,0}(\bp)\right\}  \tag{\ref{prob:main-2}}\\
				{\text{s.t.}} 
				& \quad R^{\rm{C}}_q(\text{\bf p}) = \zeta_q, \quad \forall q\in \mathcal{Q},
			\end{align}
		\end{subequations}
		and then an optimal solution $\hat\ba$ to the following problem
	\begin{subequations}\label{prob:aq-solve}
		\begin{align}
			\max \limits_{\ba} & \quad {\sum}_{q=1}^Q \hat a_{q}  \tag{\ref{prob:aq-solve}}\\
			{\text{s.t.}} & \quad 	{\sum}_{q=1}^Q\hat a_{q} = \min \limits_{q\in\cQ} \left\{R^{\rm{C}}_{q,0}(\hat\bp)\right\},\\		
			& \quad \hat a_{q}\ge C^{\rm{TH}}_q - \zeta_{q}, \quad \forall q\in \mathcal{Q}.
		\end{align}
	\end{subequations}
	The latter problem is a linear program whose optimal solution $\hat\ba$ can be  determined as
	\begin{subequations} \label{eq:aq-solve}
		\begin{align}
			\nonumber
			{}& \begin{cases} 
				\vspace{4pt}
				\hat a_q=C^{\rm{TH}}_q -  \zeta_q,\quad \forall\, q\in \mathcal{Q}\setminus \{q'\},\\	
				\vspace{4pt}
				\hat a_{q'}=  \min \limits_{q\in\cQ} \{R^{\rm{C}}_{q,0}(\hat\bp)\} -{\sum}_{q=1,q\not=q'}^{Q}\hat a_{q},
			\end{cases} 
		\end{align}
	\end{subequations}
	for any arbitrary choice of user $q'\in\cQ$. 
	It remains to prove that $(\ref{prob:main-2})$ is also polynomially solvable. Indeed, by introducing a new variable $T$, it can be reformulated as
	\begin{subequations}\label{prob:main-3}
		\begin{align}
			\max \limits_{\bp\in\cX} & \quad  T   \tag{\ref{prob:main-3}}\\
			{\text{s.t.}} & \quad 	R^{\rm{C}}_{q,0}(\bp)\ge T,\hspace{0.95cm} \forall q\in \mathcal{Q},\label{eq:min-rate-0-t}\\
			& \quad R^{\rm{C}}_q(\text{\bf p}) = \zeta_q, \hspace{1.01cm} \forall q\in \mathcal{Q} \label{eq:p-rate-q}.
		\end{align}
	\end{subequations}
	Although problem (\ref{prob:main-3}) is nonlinear (due to the nonlinearity of the constraints  (\ref{eq:min-rate-0-t})), it can be solved by using binary search combined with the linear programming method. In fact, the constraints  (\ref{eq:min-rate-0-t}) can be rewritten as
	\[
	h^{\rm{C}}_q{\sum}_{q' = 1}^Q p_{q'} +   {\sum}_{k=1}^{K}  g^{\rm{RC}}_{k,q}p^{\rm{R}}_k +  \sigma^2_q   \le \frac{h^{\rm{C}}_q p_0}{ 2^{T/B}-1}, \quad \forall q\in \mathcal{Q}.
	\]
	For a fixed value of $T$, the above constraints are linear and thus the problem (\ref{prob:main-3}) comes down to checking if a linear program is feasible, which can be done in polynomial time. Consequently, to find an  optimal value $T^*$ of (\ref{prob:main-3}), one can make use of the binary search over the range $[0,\text{\bf T}]$, where $\text{\bf T}$ is some upper bound on the possible maximum value of $T^*$. We will discuss in Section~\ref{sec:upper-bound-data-rate} how to compute such an upper bound  $\text{\bf T}$. The complexity of solving (\ref{prob:main-3}) is $O( (Q+K)^{2.5}\cdot\log(\text{\bf T}))$, since there are totally $Q+K$ variables,  excluding variable $T$.~\end{proof}
\subsection{Unknown Private Data Rates}
We now consider the case where the algorithm does not know the users' optimal private data rates as part of its input. 
Lemma \ref{lem:core} paves the way for dealing with such a case. In fact, instead of using the exact value of the private rates   $R^{\rm{C}}_1(\bp^*),\ldots, R^{\rm{C}}_Q(\bp^*)$, the idea is to work with their approximate value, which can be computed with a multiplicative error of $1+\epsilon$, where $\epsilon = 2^{\delta}-1$. This can be done via an enumeration technique, which we call ``{\em a partition space procedure}", as described below. By the error $1+\epsilon$, it means that the closer the value of $\epsilon$ is to zero, the closer the approximate value of $R^{\rm{C}}_q(\bp^*)$ approaches its exact value. The approximation values can be then used instead of the exact values when solving (\ref{prob:main}) (as shown in the proof of Lemma~\ref{lem:core}), leading to an approximate solution. This solution can be arbitrarily close to the optimal solution $(\ba^*,\bp^*)$ as long as $\epsilon$ (or $\delta$) is sufficiently small. 

In the following, we give a detailed description of the ``{\em  partition space procedure}". The purpose of this procedure is to enumerate a {\em small} number of possible values that could be considered as approximate values of  $R^{\rm{C}}_q(\bp^*)$. To this end, we first need to estimate an upper bound on the value of $R^{\rm{C}}_q(\bp^*)$. Equivalently, by the definition of $R^{\rm{C}}_q(\text{\bf p})$, the estimation can be done by computing an upper bound on the value of
\[
\bar R^{\rm{C}}_q(\text{\bf p})=1 + \frac{h^{\rm{C}}_qp_q}{ h^{\rm{C}}_q{\sum}_{{q' = 1, q'\neq q}}^Q p_{q'} +   {\sum}_{k=1}^{K} g^{\rm{RC}}_{k,q}p^{\rm{R}}_k + \sigma^2_q},
\]
which is denoted as $\ub_q$. We will show later on in Section~\ref{sec:upper-bound-data-rate} how this bound can be obtained.  
We consider the range of the function $\bar R^{\rm{C}}_q(\text{\bf p})$ as $[1,\ub_q]$, where $\ub_q$ is maximum value of $\bar R^{\rm{C}}_q(\text{\bf p})$. Note that the range of the function $R^{\rm{C}}_q(\text{\bf p})$ should be $[0,B\log(\ub_q)]$. 

	\paragraph{Partition Space Procedure}
	For each $q\in \cQ$, we partition the interval $[1,\ub_q]$ into $T_q+1$ consecutive intervals as 
	\begin{equation*}
		\Delta_q= {\bigcup}_{t_q=1}^{T_q+1} \Delta^{t_q}_q, 
	\end{equation*}
	where $T_q=\lfloor \log_{1+\epsilon} \ub_q \rfloor$ and 
	\[
	\Delta^{t_q}_q=[(1+\epsilon)^{t_q-1}, (1+\epsilon)^{t_q}),~ \text{for}~ t_q\in\left\{1, \ldots, T_q\right\},
	\]
	and
	\[
	\Delta^{T_q+1}_q=[(1+\epsilon)^{T_q},\ub_q].
	\]
	The values $(1+\epsilon)^{t_q-1}$ for all $t_q\in\left\{1,\ldots,T_q+1\right\}$ represent approximate values of 
	$\bar R^{\rm{C}}_q(\text{\bf p})$. 
	
	Geometrically, for each $\bp'\in\cX$, the vector $\cR^{\rm{C}}(\bp') =\left[\bar R^{\rm{C}}_1(\bp'),\ldots,\bar R^{\rm{C}}_Q(\bp')\right]$ corresponds to some point lying inside the hypercube 
	$
	\cH=\Delta_1\times\cdots\times \Delta_Q. 
	$
	This hypercube can be seen as a partition into sub-cubes of the form
	\begin{equation}
		\label{box}
		\cH(t_1,\ldots,t_Q)=\Delta^{t_1}_1\times\cdots\times \Delta^{t_Q}_Q\subseteq \cH,
	\end{equation}
	where $(t_1,\ldots,t_Q)\in\left\{1,\ldots,T_1+1\right\}\times \cdots \times \left\{1,\ldots,T_Q+1\right\}$. Note that the number of such sub-cubes is $n=\prod_{q\in\cQ}(T_q+1)\le \prod_{q\in \cQ}(\log_{1+\epsilon} \ub_q +1)$. 
	
	Given $\cH$ and its partition constructed as above, we define a subproblem of  (\ref{prob:main}) as follows. Given a sub-cube $\cH(t_1,\ldots,t_Q)$, (i) check if there any feasible solution $\bp'\in\cX$ for which $\cR^{\rm{C}}(\bp')\in \cH(t_1,\ldots,t_Q)$; and (ii) if yes, return one, among such solutions, along with a vector $\ba'$ such that  the sum of common rates $\sum_{q\in\cQ}a_q$ is maximized subject to the constraint (\ref{eq:min-rate-user}).  Formally, the subproblem can be formulated as
	\begin{subequations}\label{prob:main-4}
		\begin{align}
			\max \limits_{\ba,\bp \in \cX } & \quad   {\sum}_{q\in\cQ}a_q \tag{\ref{prob:main-4}}\\
			{\text{s.t.}} & \quad 	{\sum}_{q\in\cQ}a_q\le \min \limits_{q\in\cQ} \left\{R^{\rm{C}}_{q,0}(\bp)\right\},\label{eq:min-r-qc}\\
			& \quad a_{q}+R^{\rm{C}}_q(\text{\bf p})\ge C^{\rm{TH}}_q , \quad \forall q\in \mathcal{Q},\label{eq:min-rate-q-1}\\
			& \quad  \bar R^{\rm{C}}_q(\text{\bf p})\in \Delta_q^{t_q}, \qquad \quad \, \,\,\, \forall q\in \mathcal{Q}.\label{eq:in-delta-q}		\end{align}
	\end{subequations}
	While the constraints (\ref{eq:in-delta-q}) can be expressed in linear forms
	\begin{equation}\label{eq:linear-form}
		(t_q-1)\log(1+\epsilon)\le \bar R^{\rm{C}}_q(\text{\bf p}) \le t_q\log(1+\epsilon),
	\end{equation}
	the non-convexity of the constraints (\ref{eq:min-r-qc}) and (\ref{eq:min-rate-q-1})  makes problem (\ref{prob:main-4}) difficult to solve exactly. Taking the constraints (\ref{eq:in-delta-q}) into account, it is natural to approximate (\ref{eq:min-rate-q-1}) by the linear constraints
	\begin{equation}\label{eq:linearize}
		a_{q}+ t_q\log(1+\epsilon) \ge C^{\rm{TH}}_q , \quad \forall q\in \mathcal{Q},
	\end{equation}
	with the price that the total rate of each user may decrease by an additive error of $\delta=\log(1+\epsilon)$, as shown in the lemma below. We consider the problem obtained from (\ref{prob:main-4}) by replacing (\ref{eq:min-rate-q-1}) by (\ref{eq:linearize}) as its approximate problem. 
	\begin{lemma}\label{lem:obj-val}
		If (\ref{prob:main-4}) is feasible, its approximated problem is also feasible  and has an optimal solution whose value is within an additive error of $\delta\cdot Q$ from the optimum of (\ref{prob:main-4}).
	\end{lemma}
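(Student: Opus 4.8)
The plan is to argue feasibility and the additive bound separately, exploiting the fact that the only change between (\ref{prob:main-4}) and its approximated problem is the replacement of the nonconvex constraints (\ref{eq:min-rate-q-1}) by the linear constraints (\ref{eq:linearize}). First I would show feasibility. Suppose $(\ba,\bp)$ is feasible for (\ref{prob:main-4}). Then $\bar R^{\rm{C}}_q(\bp)\in\Delta_q^{t_q}$, so by the definition of $R^{\rm{C}}_q(\bp)=B\log(\bar R^{\rm{C}}_q(\bp))$ and the right endpoint of $\Delta_q^{t_q}$ we get $R^{\rm{C}}_q(\bp)\le t_q\log(1+\epsilon)$ (using the binary-logarithm convention and $\delta=\log(1+\epsilon)$; if $B$ is meant to appear it can be absorbed into $\delta$). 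Combining this with (\ref{eq:min-rate-q-1}), namely $a_q\ge C^{\rm{TH}}_q-R^{\rm{C}}_q(\bp)\ge C^{\rm{TH}}_q - t_q\log(1+\epsilon)$, shows that the same $(\ba,\bp)$ satisfies (\ref{eq:linearize}); all other constraints — (\ref{eq:min-r-qc}), (\ref{eq:in-delta-q})/(\ref{eq:linear-form}), membership in $\cX$, nonnegativity — are untouched. Hence $(\ba,\bp)$ is feasible for the approximated problem, which establishes the first claim and also shows that the approximated problem's feasible region contains that of (\ref{prob:main-4}).

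Next I would bound the objective gap. Because the approximated problem is a relaxation in the sense just described (its feasible set is a superset of that of (\ref{prob:main-4})), its optimal value is at least the optimum of (\ref{prob:main-4}); so I only need a matching upper estimate, i.e. to show the approximated optimum exceeds the true optimum by at most $\delta\cdot Q$. Let $(\tilde\ba,\tilde\bp)$ be an optimal solution of the approximated problem, with objective $\sum_{q\in\cQ}\tilde a_q$. The key observation is that $(\tilde\ba,\tilde\bp)$ need not satisfy (\ref{eq:min-rate-q-1}), but the violation is controlled: since $\bar R^{\rm{C}}_q(\tilde\bp)\in\Delta_q^{t_q}$ we also have the \emph{lower} endpoint bound $R^{\rm{C}}_q(\tilde\bp)\ge (t_q-1)\log(1+\epsilon)$, so (\ref{eq:linearize}) gives $\tilde a_q \ge C^{\rm{TH}}_q - t_q\log(1+\epsilon)\ge C^{\rm{TH}}_q - R^{\rm{C}}_q(\tilde\bp) - \log(1+\epsilon)$. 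Therefore, setting $a'_q := \tilde a_q - \delta$ (with $\delta=\log(1+\epsilon)$) — or more carefully $a'_q:=\max\{0,\,C^{\rm{TH}}_q - R^{\rm{C}}_q(\tilde\bp)\}$ and noting $a'_q \ge \tilde a_q-\delta$ — produces a vector $(\ba',\tilde\bp)$ that satisfies (\ref{eq:min-rate-q-1}). I would then check that $(\ba',\tilde\bp)$ still satisfies (\ref{eq:min-r-qc}): decreasing each $a_q$ only relaxes $\sum_q a_q\le \min_q R^{\rm{C}}_{q,0}(\tilde\bp)$, and (\ref{eq:in-delta-q}) depends only on $\tilde\bp$, so $(\ba',\tilde\bp)$ is feasible for (\ref{prob:main-4}). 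Its objective is $\sum_q a'_q \ge \sum_q(\tilde a_q-\delta) = \big(\sum_q \tilde a_q\big) - \delta Q$, so the optimum of (\ref{prob:main-4}) is at least the approximated optimum minus $\delta Q$, which is the desired bound.

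The one point that needs care — and the main obstacle — is the handling of the nonnegativity constraint $a_q\ge 0$ when I subtract $\delta$: if some $\tilde a_q<\delta$, the naive $a'_q=\tilde a_q-\delta$ is infeasible. This is why I would instead define $a'_q$ as the smallest nonnegative value meeting (\ref{eq:min-rate-q-1}), i.e. $a'_q=\max\{0,\,C^{\rm{TH}}_q-R^{\rm{C}}_q(\tilde\bp)\}$, and separately verify the inequality $a'_q\ge \tilde a_q-\delta$. When $C^{\rm{TH}}_q-R^{\rm{C}}_q(\tilde\bp)\ge 0$ this is exactly the computation above; when it is negative, $a'_q=0\ge \tilde a_q-\delta$ provided $\tilde a_q\le\delta$, and if $\tilde a_q>\delta$ one can still only be decreasing $a_q$, so the per-user loss is at most $\tilde a_q\le$ (the slack), which is again absorbed — in all cases $a'_q\ge\tilde a_q-\delta$, so summing over the $Q$ users yields the additive error $\delta\cdot Q$. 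Everything else is a routine check that the remaining (linear) constraints are monotone in the $a_q$'s or independent of them.
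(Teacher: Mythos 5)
Your feasibility argument coincides with the paper's and is fine: any feasible point of (\ref{prob:main-4}) has $R^{\rm{C}}_q(\bp)\le t_q\log(1+\epsilon)$ by (\ref{eq:in-delta-q}), hence satisfies (\ref{eq:linearize}), so the approximated problem is a relaxation of (\ref{prob:main-4}) with the same objective. The second half, however, has a genuine gap, and it comes from rounding in the wrong direction. The constraint (\ref{eq:linearize}) is \emph{weaker} than (\ref{eq:min-rate-q-1}): an approx-optimal $(\tilde\ba,\tilde\bp)$ only guarantees $\tilde a_q+R^{\rm{C}}_q(\tilde\bp)\ge C^{\rm{TH}}_q-\delta$, so to restore (\ref{eq:min-rate-q-1}) you must \emph{increase} $a_q$, not decrease it; with $a'_q=\tilde a_q-\delta$ the slack worsens to $2\delta$ and the point is still infeasible. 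Your fallback $a'_q=\max\{0,\,C^{\rm{TH}}_q-R^{\rm{C}}_q(\tilde\bp)\}$ does satisfy (\ref{eq:min-rate-q-1}), but the inequality $a'_q\ge\tilde a_q-\delta$ that your bound rests on is false in general: when $R^{\rm{C}}_q(\tilde\bp)>C^{\rm{TH}}_q$ you get $a'_q=0$ while $\tilde a_q$ can be large, since the size of $\tilde a_q$ is governed by the common-rate budget (\ref{eq:min-r-qc}), not by the QoS constraint. The natural repair $a'_q=\max\{\tilde a_q,\,C^{\rm{TH}}_q-R^{\rm{C}}_q(\tilde\bp)\}$ increases the $a_q$'s and can then violate $\sum_q a_q\le\min_q R^{\rm{C}}_{q,0}(\tilde\bp)$, so this whole route is blocked.

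The deeper issue is that you are bounding the wrong quantity. The ``value'' the lemma must control (and the one used in the proof of Theorem~\ref{th:exact-alg}) is the total rate $\sum_q\bigl(a_q+R^{\rm{C}}_q(\bp)\bigr)$, and the $\delta\cdot Q$ loss lives entirely in the \emph{private} rates, not in $\sum_q a_q$. The paper's argument needs no rounding at all: since the approximated problem is a relaxation with the same objective, $\sum_q a'_q\ge\sum_q\overline a_q$ with zero loss; and since constraint (\ref{eq:in-delta-q}) is retained, both $\bar R^{\rm{C}}_q(\bp')$ and $\bar R^{\rm{C}}_q(\overline\bp)$ lie in the same interval $\Delta_q^{t_q}$, whence $R^{\rm{C}}_q(\bp')\ge(t_q-1)\log(1+\epsilon)\ge R^{\rm{C}}_q(\overline\bp)-\delta$ for each $q$. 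Summing the two estimates gives $\sum_q\bigl(a'_q+R^{\rm{C}}_q(\bp')\bigr)\ge\sum_q\bigl(\overline a_q+R^{\rm{C}}_q(\overline\bp)\bigr)-\delta\cdot Q$, which is the statement actually needed downstream.
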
	
	\begin{proof}
		Suppose that  (\ref{prob:main-4}) has an optimal solution $(\overline\ba,\overline\bp)$. By the feasibility of this solution to the constraints  (\ref{eq:min-rate-q-1}), it holds that
		\[
		C^{\rm{TH}}_q\le {\sum}_{q\in\cQ} \left(\overline a_q + R^{\rm{C}}_q (\overline\bp)\right) \le {\sum}_{q\in\cQ} \left(\overline a_{q}+ t_q\log(1+\epsilon) \right),
		\]
		where the second inequality is due to (\ref{eq:linear-form}). This implies the feasibility of the approximated problem of (\ref{prob:main-4}), and let $(\ba',\bp')$ be its optimal solution. Then, we must have that $\sum_{q\in\cQ} a'_q \ge \sum_{q\in\cQ} \overline a_q$, and 
		\begin{align*}
			{\sum}_{q\in\cQ} R^{\rm{C}}_q (\bp')\ge &\, {\sum}_{q\in\cQ} \log(1+\epsilon)^{t_q-1} \\
			= &\, {\sum}_{q\in\cQ} \left(\log(1+\epsilon)^{t_q} - \log(1+\epsilon)\right) \\
			\ge &\, {\sum}_{q\in\cQ} R^{\rm{C}}_q (\overline\bp) -\delta\cdot Q.
		\end{align*}
		Therefore,
		\[
		{\sum}_{q\in\cQ} \left(a'_q + R^{\rm{C}}_q (\bp')\right)\ge {\sum}_{q\in\cQ} \left(\overline a_q + R^{\rm{C}}_q (\overline\bp)\right)-\delta\cdot Q,
		\]
		and this completes the proof of the lemma.~\end{proof}
	Lemma~\ref{lem:obj-val} hints that an optimal solution to the approximated problem of (\ref{prob:main-4}) can be seen as a good approximated solution to (\ref{prob:main-4}). 
	By using the similar proof of Lemma~\ref{lem:core}, one can prove that the approximated problem can be solved in polynomial time.
	
	\begin{algorithm}[!htb]
		\caption{Additive Approximation Scheme (AAS)} \label{ptas}
		\begin{algorithmic}[1]
			\Require a parameter $\delta \in (0,1)$.
			\Ensure a solution with an additive error of $\delta\cdot Q$
			\State $\cP\leftarrow \emptyset$.
			\State $\cR^{\rm{C}}(\text{\bf p})\leftarrow (\bar R^{\rm{C}}_1(\text{\bf p}),\ldots,\bar R^{\rm{C}}_Q(\text{\bf p}))$
			\State Compute upper bounds on the optimal private data rate of users
			\State Define the range of $\cR^{\rm{C}}(\text{\bf p})$ as a hypercube $\cH$
			\State Partition $\cH$ into sub-cubes $\cH(t_1,\ldots,t_Q)$
			\For{each sub-cube $\cH(t_1,\ldots,t_Q)$}\label{step:for-loop}
			\State Find (if any) an optimal solution $(\ba',\bp')$ to the approximated problem of (\ref{prob:main-4})
			\State $\cP\leftarrow \cP\cup (\ba',\bp')$
			\EndFor
			\State $(\tilde\ba,\tilde\bp) \leftarrow \arg\max_{(\ba,\bp)\in\cP} \sum_{q=1}^Q (a_q + R^{\rm{C}}_q(\text{\bf p}))$
			\State \Return $(\tilde\ba,\tilde\bp)$ 
		\end{algorithmic}
	\end{algorithm}
	
	\paragraph{Correctness of Algorithm \ref{ptas}}
	We are now ready to prove Theorem \ref{th:exact-alg}. Suppose that the problem (\ref{prob:main}) is feasible and has an optimal solution $(\ba^*,\bp^*)$. Let $(\tilde\ba,\tilde\bp)$ be the solution returned by Algorithm~\ref{th:exact-alg}. It holds that $\cR^{\rm{C}}(\bp^*)$ must belong to the hypercube $\cH$. In particular, there must be $(t_1,\ldots,t_Q)\in\left\{1,\ldots,T_1+1\right\}\times \cdots \times \left\{1,\ldots,T_Q+1\right\}$ such that
	\[
	\cR^{\rm{C}}(\bp^*)\in\cH (t_1,\ldots,t_Q),
	\]
	or, equivalently,
	\begin{equation}
		\label{eq:epsilon}
		\bar R^{\rm{C}}_q(\bp^*)\in\Delta_q^{t_q},\quad \text{for all}\,\, q\in\cQ.
	\end{equation}
	By Lemma \ref{lem:obj-val}, the approximated problem (\ref{prob:main-4}) w.r.t $\cH (t_1,\ldots,t_Q)$ is feasible as it has $(\ba^*,\bp^*)$ as a feasible solution. Moreover, 
	\[
	{\sum}_{q\in\cQ} \left(a'_q + R^{\rm{C}}_q (\bp')\right)\ge {\sum}_{q\in\cQ} \left(a^*_q + R^{\rm{C}}_q (\bp^*)\right)-\delta\cdot Q,
	\]
	where $(\ba',\bp')$ is an optimal solution returned by Step 6 of Algorithm~\ref{ptas}. By definition of $(\tilde\ba,\tilde\bp)$, we must have that
	\begin{align*}
		{\sum}_{q\in\cQ} \left(\tilde a_q + R^{\rm{C}}_q (\tilde\bp)\right)\ge &\, {\sum}_{q\in\cQ} \left(a'_q + R^{\rm{C}}_q (\bp')\right)\\
		\ge &\, {\sum}_{q\in\cQ} \left( a^*_q + R^{\rm{C}}_q (\bp^*)\right)-\delta\cdot Q.
	\end{align*}
	
	\paragraph{Complexity of Algorithm~\ref{ptas}} For the complexity of Algorithm~\ref{ptas}, it is dominated by the for-loop in step~\ref{step:for-loop}, where we need to solve the relaxed version of a subproblem within a sub-cube $\cH(t_1,\ldots,t_Q)$. Note that the number of sub-cubes is $\prod_{q\in \cQ}(\log_{1+\epsilon} \ub_q +1)$, and the time needed for solving the relaxed problem, which is a linear program with $2Q+K+1$ variables, is $O((2Q+K+1)^{2.5}\cdot\cI)$ (see, e.g \cite{LeeS15}), where $\cI$ denotes the input size. In addition, we will show later that the amount of time needed for computing $\ub_q$ for $q\in\cQ$ is $O(Q(K+1)^{2.5}\cdot\cI)$. Hence, the complexity of the algorithm is 
	\[
	O((2Q+K+1)^{2.5}\cdot \log_{1+\epsilon}^Q\ub + Q(K+1)^{2.5}\cdot\cI),
	\]  
	where $\ub=\max_{q\in\cQ}\left\{\ub_q\right\} $.

	\subsection{Bounding the optimal common and private data rates}\label{sec:upper-bound-data-rate}
	This section is devoted to finding upper bounds on the value of $R^{\rm{C}}_q(\bp^*)$ and of $R^{\rm{C}}_{q,0}(\bp^*)$. We consider the case of $R^{\rm{C}}_q(\bp^*)$ only, since the case of $R^{\rm{C}}_{q,0}(\bp^*)$ can be treated similarly. As argued earlier, it suffice to upper bound the value of $\bar R^{\rm{C}}_q(\bp^*)$. For each $q\in\cQ$, let $\ub_q = 1+\xi_q^{\rm C}$, where $\xi_q^{\rm C}$ is an optimal value to the following single-ratio fractional program
	\begin{subequations}\label{prob:bound}
		\begin{align}
			\max \limits_{\bp } & \quad \frac{h^{\rm{C}}_qp_q}{ h^{\rm{C}}_q{\sum}_{{q' = 1, q'\neq q}}^Q p_{q'} +   {\sum}_{k=1}^{K} g^{\rm{RC}}_{k,q}p^{\rm{R}}_k + \sigma^2_q} \tag{\ref{prob:bound}}\\
			{\text{s.t.}} & \quad \text{(\ref{eq:total-power}), (\ref{eq:SINR}), (\ref{eq:radar-power})},\\
			&\quad p_0,p_q\ge 0, \hspace{2.95cm} \forall q\in \mathcal{Q},
		\end{align}
	\end{subequations}
	The problem (\ref{prob:bound}) is transformable into a linear program, using a variable transformation (as known as the Charnes-Cooper transformation \cite{Frenk2005}) as follows.
	
	Let $y_0=h^{\rm{C}}_qp_q,$ and 
	\begin{equation}\label{eq:variable-trans}
		\begin{aligned}
			t=~&\dfrac{1}{h^{\rm{C}}_q{\sum}_{{q' = 1, q'\neq q}}^Q p_{q'} +   {\sum}_{k=1}^{K} g^{\rm{RC}}_{k,q}p^{\rm{R}}_k + \sigma^2_q},\\
			y_k=~&\dfrac{p^{\rm{R}}_k}{{\sum}_{k'=1}^{K} \tilde g^{\rm{RC}}_{k',q}p^{\rm{R}}_{k'} +  \tilde\sigma_k}, 
		\end{aligned}
	\end{equation}
	for all $k\in\cK$.  Then, (\ref{prob:bound}) is reduced to
	\begin{equation}\label{prob:11}
		\begin{aligned}
			\max \limits_{t,\by} & \quad  y_0 \\
			{\text{s.t.}} & \quad y_0 -  \frac{y_k}{\tilde h_k \gamma^{R}}  +\sum\limits_{\substack{k'=1\\k'\neq k}}^{K} \frac{\tilde{g}_{k',k}}{\tilde h_k}y_{k'}    \le -  \frac{\tilde\sigma_k}{\tilde h_k} t,\,\forall\,k\in\cK, \\
			& \quad {\sum}_{k=1}^{K} \tilde g^{\rm{RC}}_{k,q}y_k +\sigma_q^2 t=1,\\
			& \quad y_k\ge 0,\quad \forall\,k\in\cK\cup\{0\}.
		\end{aligned}
	\end{equation}
	
	The problem (\ref{prob:11}) is a linear program and thus can be solved in polynomial time. Given an optimal solution $(t^*,\by^*)$ to (\ref{prob:11}), one can easily recover the corresponding optimal solution to (\ref{prob:bound}) by using transformation (\ref{eq:variable-trans}). 
	
	We conclude this section by giving a complexity of $O(Q(K+1)^{2.5}\cdot\cI)$ on the finding upper bounds on the value of $R^{\rm{C}}_q(\bp^*)$ and of $R^{\rm{C}}_{q,0}(\bp^*)$, for all $q\in\cQ$. Indeed, for each $q\in\cQ$, we first need to transform the fractional program (\ref{prob:bound}) into the linear program (\ref{prob:11}), and this step takes $O(K)$ times. We need the same amount of time to convert an optimal solution of the latter problem to an optimal solution of the former one. Finally, solving (\ref{prob:11}), which is a linear program with $K+1$ variables, takes $O((K+1)^{2.5}\cdot\cI)$, where $\cI$ denotes the input size.

	\section{Performance Evaluation}\label{sec:perform-eval}
	In this section, we present experimental results to evaluate the two proposed algorithms, i.e., the AAS and SQP.  We implement these  algorithms using Python on an INTEL Core i7, $2.9$ GHz, with $8$ GB of RAM. We consider an RSMA-based CRC system in which the coordination of the CUs, BS, and radars are shown in Fig.~\ref{fig:system_model_1}. The number of CUs varies in the range $\{2,3,4,5,6\}$, which are randomly distributed in a square of $400$ m $\times$ $400$ m on the ground. The BS is at the center of the square with the location of $[0,0,0]$ m. There are two radars, i.e., Radars 1 and 2, and their locations are $[-1000,0,0]$ m and $[1000,0,0]$ m, respectively. The two radars perform tracking a common target, whose coordination is $[0,0,10000]$ m. Note that the distance between any two entities in the network is measured by using the Euclidean norm, i.e., $d_{1,2}= \sqrt{(x_2-x_1)^2 + (y_2-y_1)^2+ (z_2-z_1)^2}$ for entities $1$ and $2$ located at $\left[x_1, y_1, z_1\right]$ and $\left[x_2, y_2, z_2\right]$, respectively. Other simulation parameters are listed in Table~\ref{table:parameters}. 
	
	\begin{figure}[t]
		\centering
		\includegraphics[width=\linewidth]{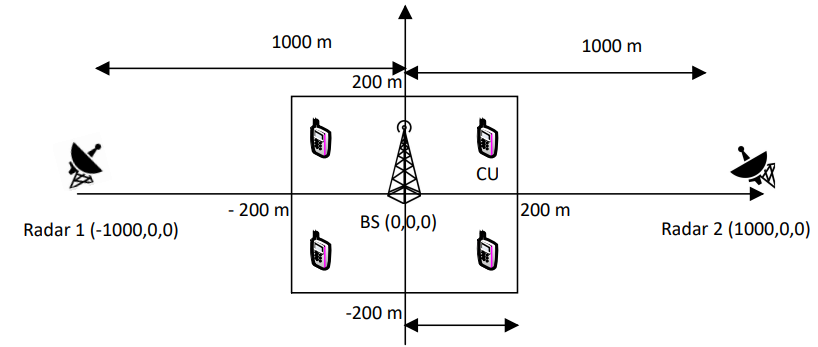}
		\caption{Coordination of RSMA-based CRC system.}
		\label{fig:system_model_1}
	\end{figure}

	\begin{table}[!h]
		\caption{Simulation Parameters}
		\label{table:parameters}
		\centering
		\begin{tabular}{llc}
			\hline\hline
			{Parameters} 		& {\em Value} \\ [0.5ex]
			\hline
			Wave length ($\lambda_c$) & $0.1$ m \\ 
			Maximum power of BS (${\bar{p}}^{\rm{C}}$)   & $30$ dBm \\ 
			Maximum power of radar (${\bar{p}}^{\rm{R}}$)   & $1000$ W  \\ 
			Transmitting antenna gain of BS ($G_{t}^{\text{C}}$)  & $17$ dBi  \\ 
			Receiving antenna gain of CU ($G_{q}$)  & $0$ dBi (1)   \\ 
			Radar antenna gain ($G_{k,t}^{\text{R}}, G_{k,r}^{\text{R}}$)  & $30$ dBi \\ 
			$G_{k,t}^{'\rm{R}}$ & $-27$ dBi~\cite{shi2018non} \\
			$G_{k,r}^{'\rm{R}}$ & $-27$ dBi~\cite{shi2018non}  \\
			$\sigma_k^{\rm{RCS}}, \sigma_{k,k'}^{\rm{RCS}}$   & $1$ m$^2$~\cite{shi2018non} \\ 
			$\sigma^2_q, \sigma^2_k$ &  $-150$ dBm/Hz\\
			$\gamma^R$ &  $10$ dB \\
			Bandwidth $B$ &  $1$ MHz  \\
			$C_q^{\rm{TH}} $ &  $0.1$ Mbs/s \\
			\hline
		\end{tabular}
		\label{table:parameters}
	\end{table}

	\begin{figure}[h!]
		\centering
		\includegraphics[width=0.8\linewidth]{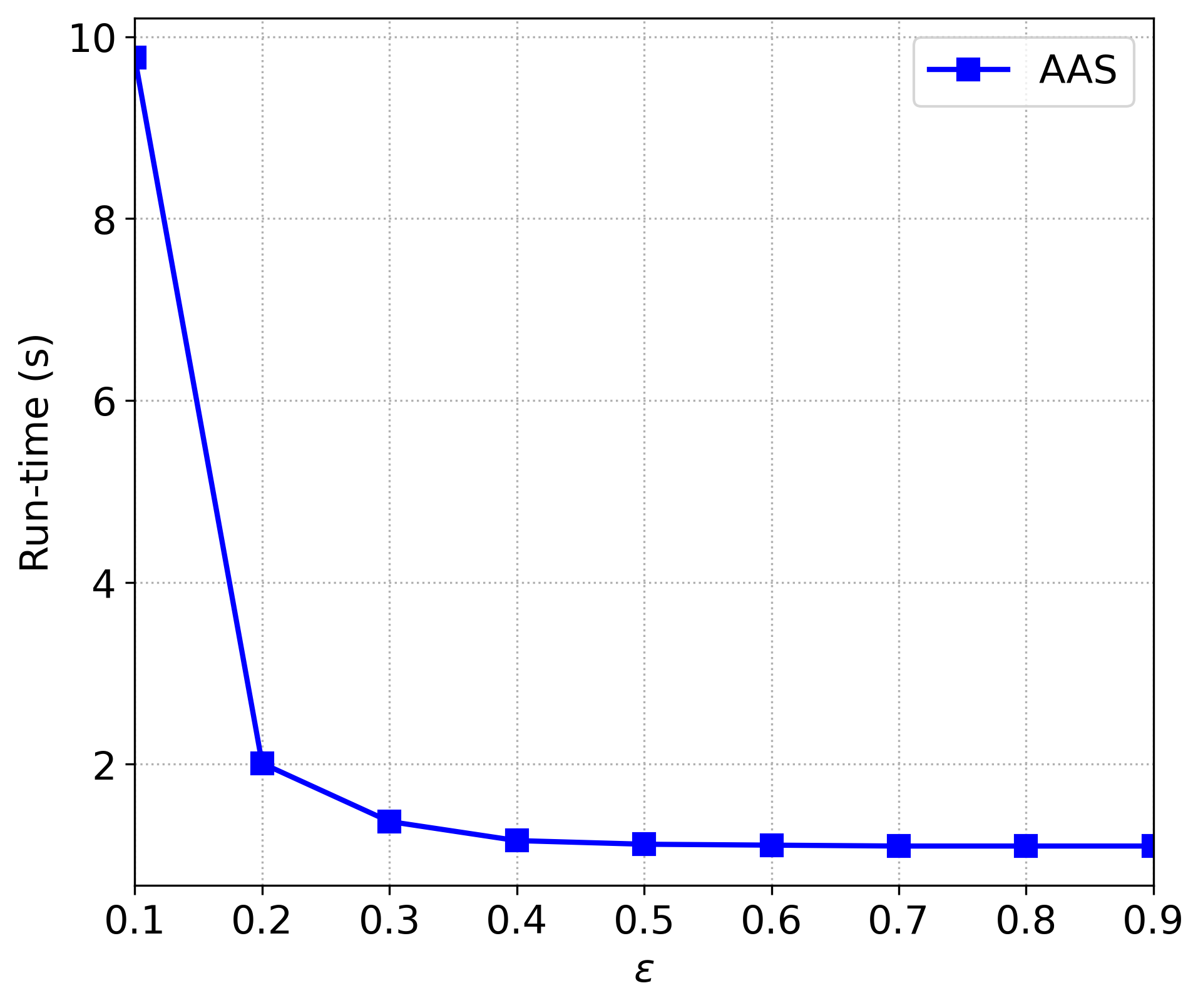}
		\caption{Run-time versus value of $\epsilon$ ($Q=4$ users).}\label{fig:time-epsilon}
	\end{figure}
	
	It is important to properly select the accuracy parameter of $\epsilon$ of the AAS algorithm. The reason can be explained as follows. As presented in Section~\ref{sec:exact_algorithm}, there is a trade-off between the accuracy in the objective value and execution time of AAS. In particular, the proposed algorithm obtains a higher accurate solution as $\epsilon$ is set smaller. However, the algorithm requires more execution time to achieve the accuracy. However, as shown in Fig.~\ref{fig:time-epsilon}, the execution time is low at $\epsilon=0.2$ and keeps constant as $\epsilon \geq 0.4$. In this work, to limit the execution time of AAS algorithm with an acceptable accuracy, we set $\epsilon=0.2$ at which the execution time is around $2$ seconds (see Fig.~\ref{fig:time-epsilon}).
	
	\subsubsection{Execution time of AAS and SQP}
	\label{subsec:executime}
	Figure~\ref{fig:time-user} shows the time that the AAS and SQP algorithms execute in our computing environment. As observed from the figure, given the value of $\epsilon=0.2$ and the number of CUs, the AAS algorithm requires more time than the SQP algorithm. Moreover, as the number of CUs increases, i.e., $Q \geq 5$, the execution time of the AAS algorithm increases rapidly along with the increase of number of CUs. This is due to that the complexity of AAS is an exponential function with respect to the number of CUs as proved in Section~\ref{sec:exact_algorithm}. In contrast, the execution time of SQP slightly changes and always less than half of second.
	\begin{figure}[h!]
		\centering
		\includegraphics[width=0.8\linewidth]{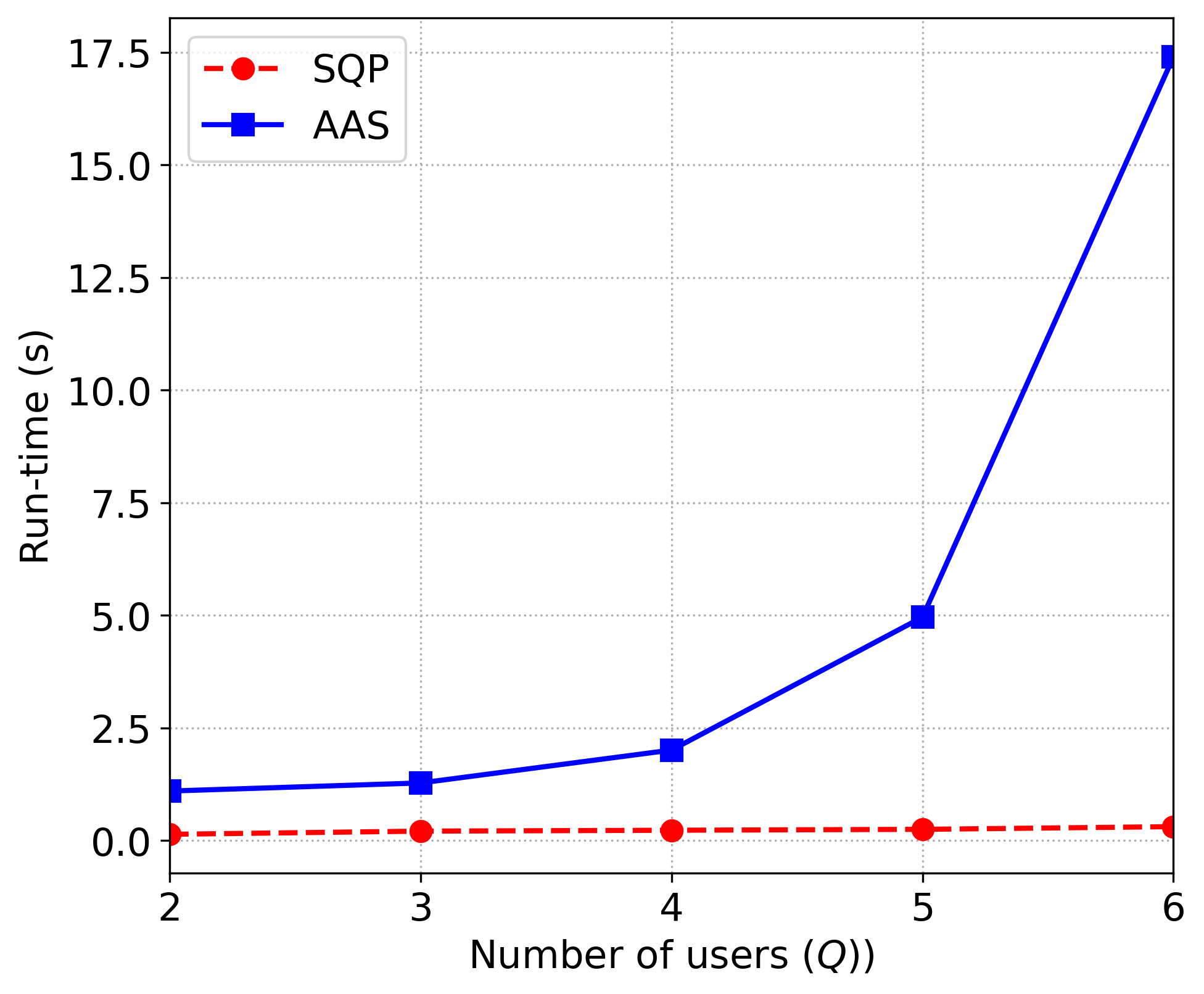}
		\caption{Run-time versus number of users.}\label{fig:time-user}
	\end{figure}

	
	\begin{figure}[h!]
		\centering
		\includegraphics[width=0.8\linewidth]{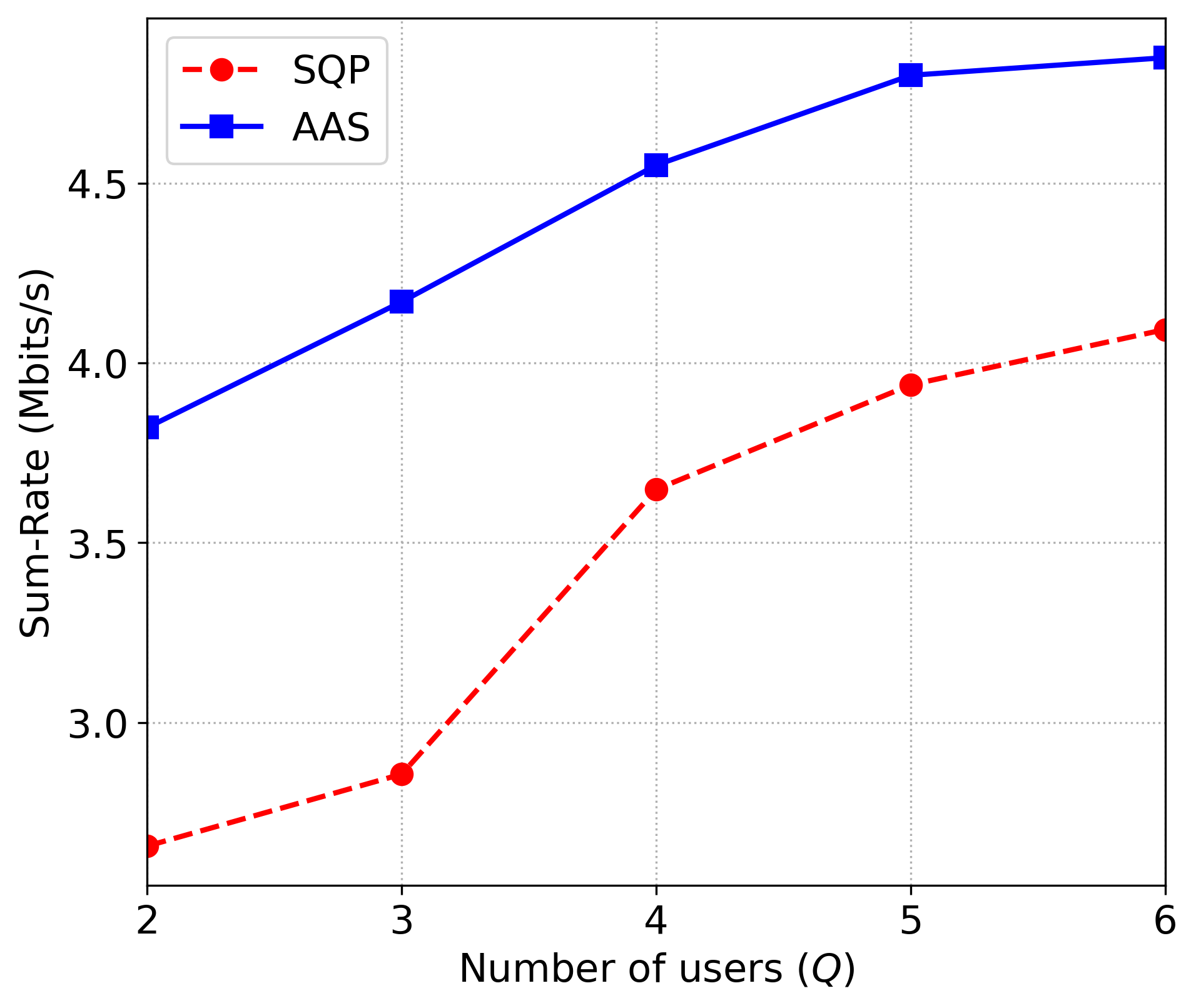}
		\caption{Sum rate versus number of users.}\label{fig:number-user}
	\end{figure}
	\subsubsection{Impact of the number of CUs} Next, we discuss the major objective, i.e., the sum rate of the CUs, obtained by the AAS and SQP algorithms and show how the number of CUs $Q$ affects the sum rate. As shown in Fig. \ref{fig:number-user}, given any the number of CUs, the AAS algorithm always achieves a better performance than SQP. The reason is that the SQP algorithm only produces the suboptimal solution to the $\trmp$ problem, while the AAS algorithm is able to find an almost exact optimal solution, as long as the accuracy $\epsilon$ is chosen to be small enough. Note that $\epsilon$ can be set such that the sum rate obtained by the AAS algorithm is further higher at
	the cost of execution time (as discussed in Section~\ref{subsec:executime}). Figure~\ref{fig:number-user} further shows that the sum rate of the CUs obtained by both the AAS and SQP algorithms rapidly increases as the number of CUs increases. This is due to the multiuser gain of the RSMA scheme~\cite{yang2021optimization}, compared with conventional NOMA.

	\begin{figure}[h!]
		\centering
		\includegraphics[width=0.8\linewidth]{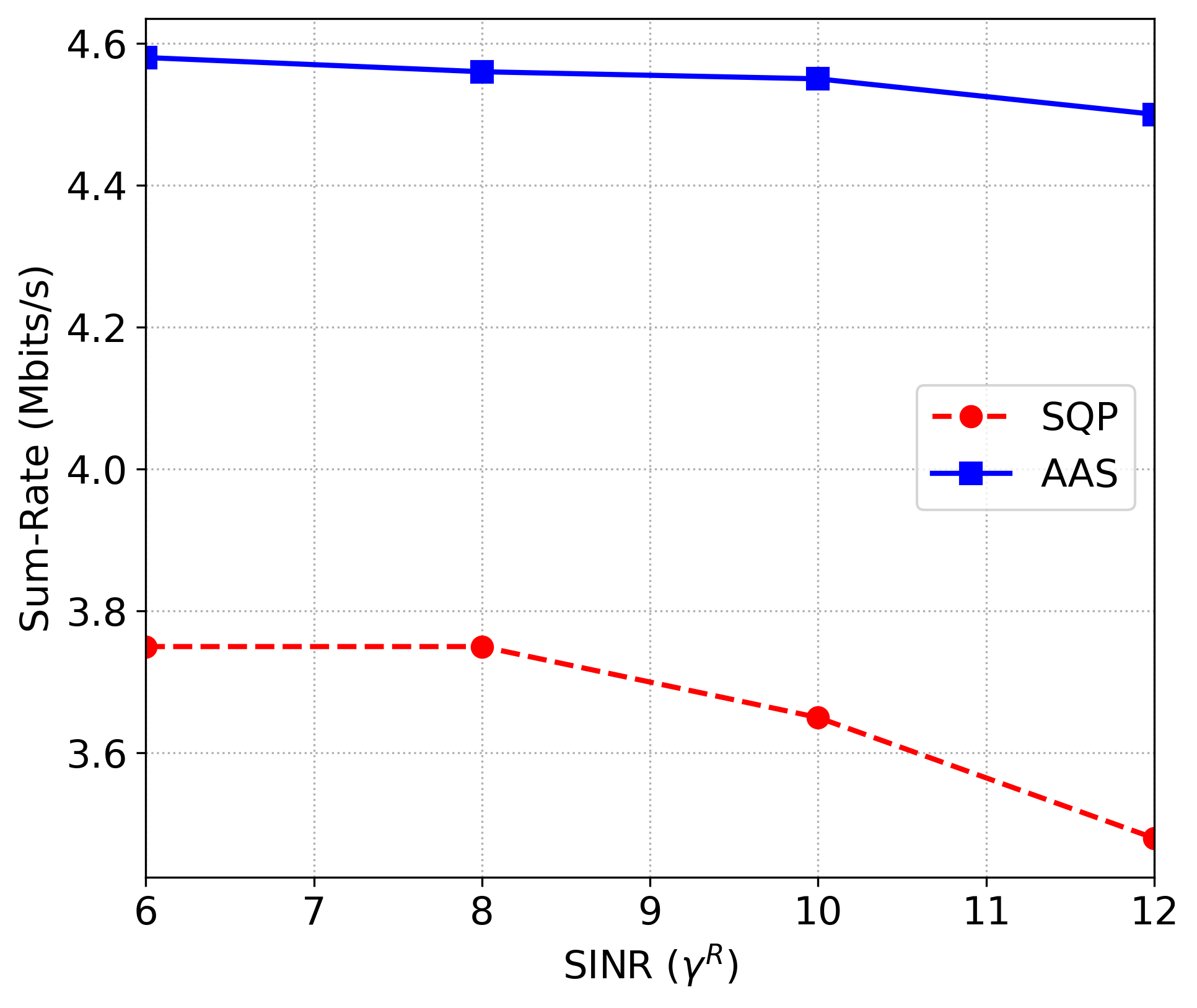}
		\caption{Sum rate versus the SINR threshold $\gamma^R$ ($Q=4$ users).}\label{fig:SINR}
	\end{figure}
	
	\subsubsection{Impact of the SINR requirement of radars} Our work aims to maximize the sum rate of the CUs, subject to the radars' SINR requirements, i.e., $\gamma^R$. Therefore, it is necessary to discuss how the sum rate of the CUs changes as the SINR requirements of the radars change. As illustrated in Fig.~\ref{fig:SINR}, given the number of users of $Q=4$, as $\gamma^R$ increases, the sum rate of the CUs in the RSMA-based communication system decreases. This is explained based on the constraint given in (\ref{eq:SINR}) in which the total transmit power of of the CUs is inversely proportional to $\gamma^R$. Thus, the increase of $\gamma^R$ decreases the sum rate of the CUs. It is worth noting that with the AAS, the sum rate of the CUs slightly decreases with the increase of $\gamma^R$, while with the the SQP algorithm, the sum rate of the CUs  dramatically decreases with the increase of $\gamma^R$. This result shows the effectiveness of the AAS algorithm compared with the SQP algorithm. We can say that with the AAS algorithm, the RSMA-based communication network better coexists with the radars. 
	
	\begin{figure}[h!]
		\centering
		\includegraphics[width=0.8\linewidth]{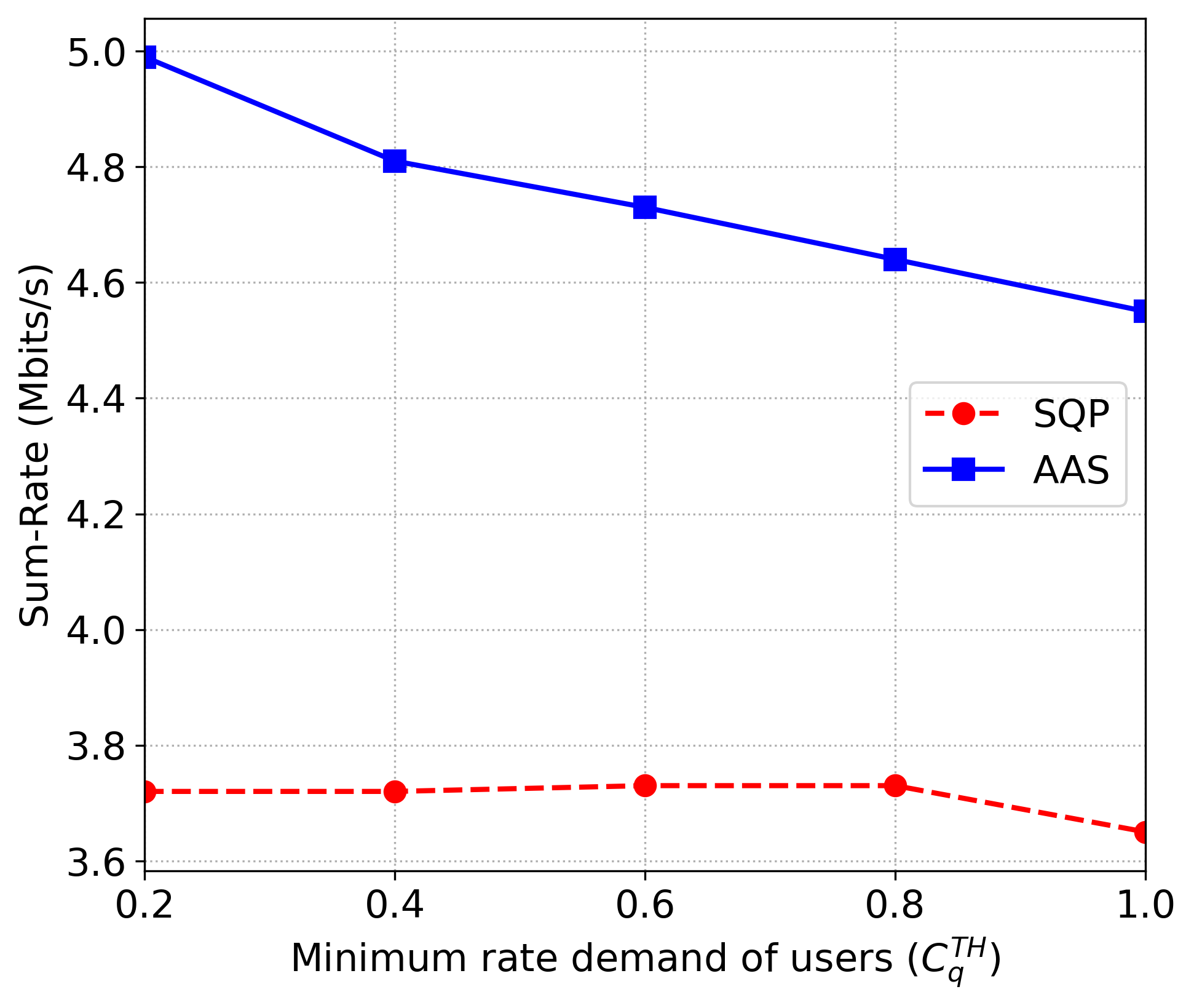}
		\caption{Sum rate versus minimum rate demand of each user ($Q=4$ users).}\label{fig:min-rate}
	\end{figure}
	
	\subsubsection{Impact of rate requirement of CUs} Figure~\ref{fig:min-rate} shows the sum rate of the CUs obtained by the AAS and SQP algorithms versus the minimum rate requirement of the CUs. Here, the minimum rate requirement of the CUs varies in the range of $[0.2;1.0]$ Mbs/s. As seen, the AAS algorithm always outperforms the SQP algorithm at every minimum rate requirement. On the other hand, it is observed from the figure that a high minimum rate requirement results in a low sum rate, especially when the minimum rate requirement goes up from $0.8$ to $1$. This is due to the fact that one may have to increase the transmit power $p_0$ of the common massage to obtain a higher common data rate, leading to a high common rate allocated to users. This, however, consequently degrades the sum rate of all users in the network.		\begin{figure}[h!]
		\centering
		\includegraphics[width=0.8\linewidth]{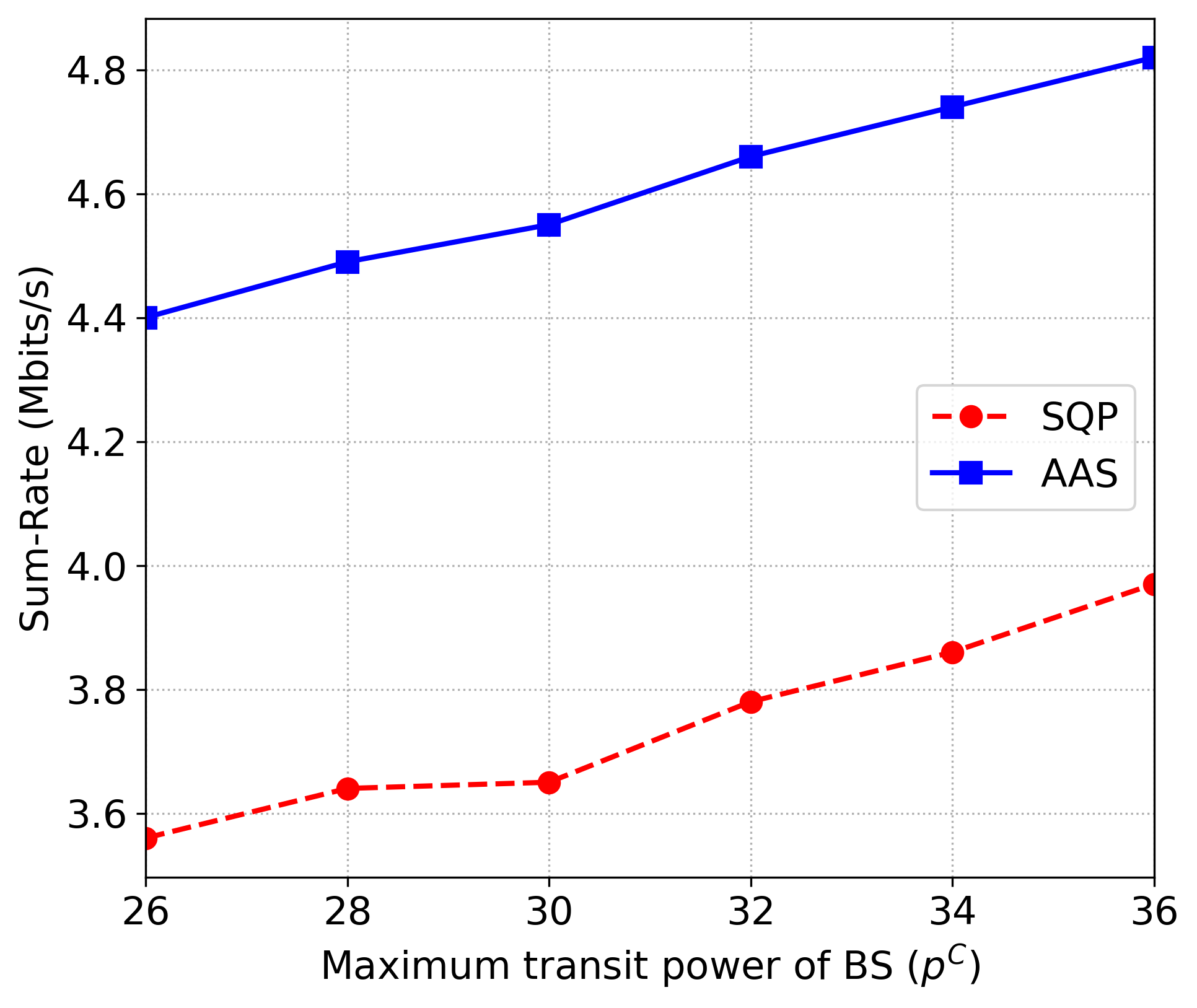}
		\caption{Sum rate versus maximum transmit power of the BS ($Q=4$ users).}\label{fig:pc}
	\end{figure}
	
	\subsubsection{Impact of the power budget of the BS} Figure~\ref{fig:pc} shows the sum rate of the CUs versus the power budget of the BS. As seen, the higher the  power budget is, the higher the sum rate is. This is obvious since the higher power budget allows the BS to allocate more power resources to the CUs, which leads to the higher sum rate.  In addition, it is seen that the AAS algorithm can achieve up to $26\%$ gain in terms of sum rate compared with the SQP algorithm. Again, the reason is that the SQP algorithm only produces
	the suboptimal solution to the TRMP problem, while the AAS
	algorithm is able to find an almost exact optimal solution.

	\begin{figure}[h!]
		\centering
		\includegraphics[width=0.8\linewidth]{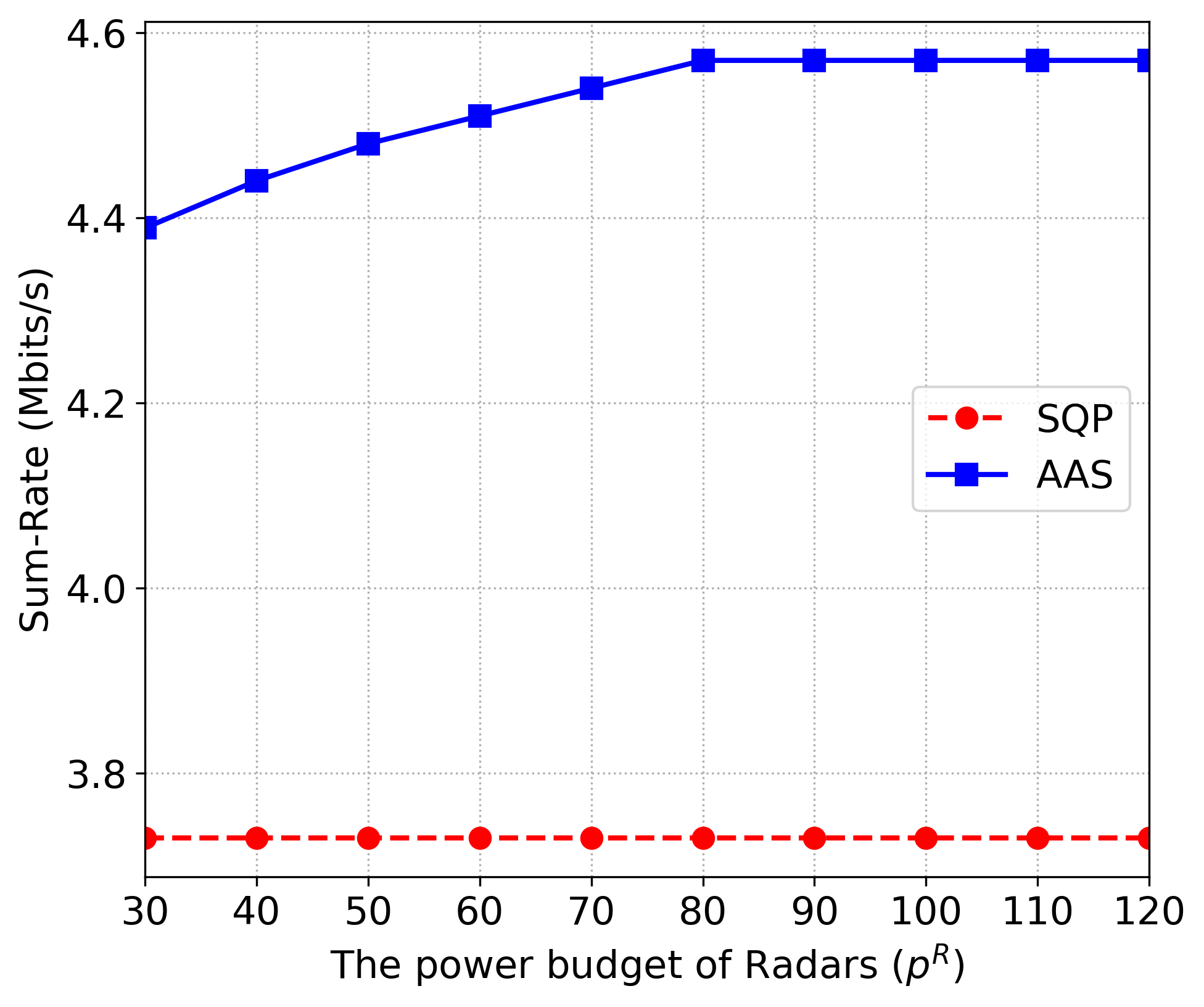}
		\caption{Sum rate versus maximum transmit power of the radars ($Q=4$ users).}\label{fig:pr}
	\end{figure}		
	
	\subsubsection{Impact of the power budget of the radars} Figure~\ref{fig:pr} illustrates the sum rate of the CUs obtained by the AAS and SQP algorithms when the power budget of the radars varies, i.e., in the range $\{30W,\ldots,120W\}$. As seen, the sum rate obtained by the SQP algorithm seems to be unvarying and around $3.7$ Mbits/s, while the sum rate achieved by AAS is much higher and increases from $4.4$ Mbits/s to nearly $4.6$ Mbits/s. Also, it is observed that the sum rate by AAS reaches the highest value at $p^{\rm R}\approx 80W$ and keeps unchanged when the power budget of radars goes beyond. This is because the optimal solution of the problem is attained when the power of every radar is around $80W$. To explain the reason that the sum rate increases as $p^{\rm R}$ increases, we reformulate the radars' SINR constraints in the following form
	\[
	{\sum}_{q = 0}^Q {{p_q}} \le \min_{k\in\cK}\left\{ \frac{p^{\rm{R}}_k}{\gamma^R\tilde h_k} - {\sum}_{{k'\neq k}} \frac{\tilde g_{k',k}}{\tilde h_k}      p^{\rm{R}}_{k'} -\frac{\tilde\sigma_k}{\tilde h_k} \right\}.
	\]
	Theoretically, we would like to find a power allocation of radars, given the range of $p^k\in[0,p^{\rm R}]$, so as to maximize the left-hand side of the equation above, which is a piece-wise linear function in $p_k$, $k\in\cK$. Moreover, the left-hand side is concave over $[0,p^{\rm R}]^K$, and thus has a unique global optimum in this domain. This explains why the sum rate objective, which is directly proportional to $\sum_{q = 0}^Q {{p_q}}$, has the shape of a concave function over the range $[30;120]$. We can explain the results shown in Fig.~\ref{fig:pr} in a different way as follows. As the power budget of the radars increases, the radar transmission power increases. With the effective interference management, the RSMA scheme enables the BS to significantly increase the transmission power of the BS, compared with the increase of the radar transmission power. This results in the increase of the sum rate. However, the increase in sum rate becomes marginal when the radar transmission power reaches $80$W due to the fact that the interference from radar is out of the interference management capability of RSMA. In this case, although the transmission power of BS keeps increasing as the radar transmission power increases, there will be no improvement in the sum rate.
	
	
	\begin{figure}[h!]
		\centering
		\includegraphics[width=0.8\linewidth]{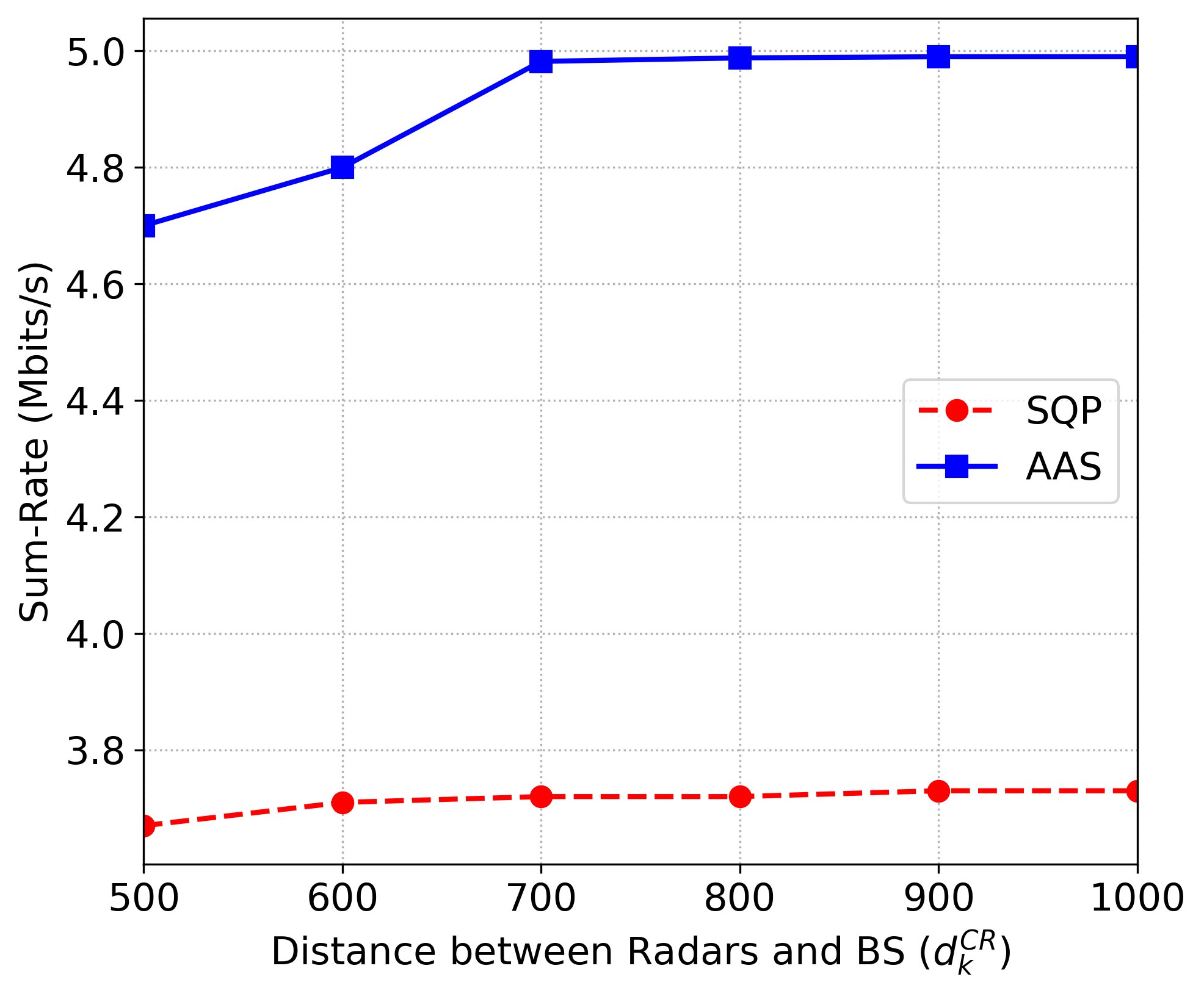}
		\caption{Sum rate versus distance (in km) between Radar 2 and BS ($Q=4$ users).}\label{fig:distance}
	\end{figure}
	\subsubsection{Impact of distance between the radars and BS} Finally, it is interesting to discuss how the distances between the radars and BS impact on the sum rate of the CUs, which helps to find appropriate locations for the radars. Figure~\ref{fig:distance} illustrates the sum rate versus the distance between Radar 2 and BS (we fix the location of Radar 1). For both the algorithms AAS and SQP, the sum rate increases when the distance between Radar 2 and the BS increases. This is because when the radar is located far from the BS and users' positions, the channel gain between the radar and users significantly decreases, leading to a higher data rate for the users. It is also observed from Fig.~\ref{fig:distance} that at the distance larger than $700$m, the sum rate seems to be unchanged. This implies that the radars should be deployed at least $700$m away from the BS so as not to affect the performance of the communication systems.
	
	\section{Conclusion}
	\label{conclusion}
	
	In this paper, we have investigated the coexistence of the RSMA-based communication network and multiple radars, i.e., the RSMA-based CRC system. In particular, we have formulated an optimization problem that optimize the common rates of the CUs, the transmit power of the common and private messages, and the transmit power of radars. The objective is to maximize the sum rate of all the CUs subject to the requirements of their data rates and those of the radars' SINR. To solve the optimization problem, we proposed two algorithms, i.e., SQP and AAS, for solving the problem locally and globally, respectively. 
	We have provided simulation results to demonstrate the improvement and effectiveness of the proposed algorithms. In particular, it has been shown that, although AAS is theoretically of high complexity, it significantly outperforms SQP in terms of objective value in all the considered cases. Furthermore, for the case of less than $5$ users, AAS can return a solution within a reasonable amount of time. In addition, through the simulation results, the best locations for the radars can be also determined.


	\bibliographystyle{IEEEtran}
	\bibliography{bibfile}{}

\begin{thebibliography}{10}

\bibitem{liu2020joint}
F.~Liu, C.~Masouros, A.~P. Petropulu, H.~Griffiths  and L.~Hanzo,
\newblock ``Joint radar and communication design: Applications,
  state-of-the-art, and the road ahead,''
\newblock {\em IEEE Transactions on Communications}, vol. 68, no. 6, pp.
  3834--3862, 2020.

\bibitem{luong2021radio}
N.~C. Luong, X.~Lu, D.~T. Hoang, D.~Niyato  and D.~I. Kim,
\newblock ``Radio resource management in joint radar and communication: A
  comprehensive survey,''
\newblock {\em IEEE Communications Surveys \& Tutorials}, vol. 23, no. 2, pp.
  780--814, 2021.

\bibitem{wang2017spectrum}
H.~Wang, J.~T. Johnson  and C.~J. Baker,
\newblock ``Spectrum sharing between communications and atc radar systems,''
\newblock {\em IET Radar, Sonar \& Navigation}, vol. 11, no. 6, pp. 994--1001,
  2017.

\bibitem{hessar2016spectrum}
F.~Hessar and S.~Roy,
\newblock ``Spectrum sharing between a surveillance radar and secondary wi-fi
  networks,''
\newblock {\em IEEE Transactions on Aerospace and Electronic Systems}, vol. 52,
  no. 3, pp. 1434--1448, 2016.

\bibitem{dizdar2020rate}
O.~Dizdar, Y.~Mao, W.~Han  and B.~Clerckx,
\newblock ``Rate-splitting multiple access: {A} new frontier for the {PHY}
  layer of 6{G},''
\newblock in {\em IEEE 92nd Vehicular Technology Conference (VTC2020-Fall)},
  2020, pp. 1--7.

\bibitem{mao2022rate}
Y.~Mao, O.~Dizdar, B.~Clerckx, R.~Schober, P.~Popovski  and H.~V. Poor,
\newblock ``Rate-splitting multiple access: Fundamentals, survey, and future
  research trends,''
\newblock {\em arXiv preprint arXiv:2201.03192}, 2022.

\bibitem{clerckx2016rate}
B.~Clerckx, H.~Joudeh, C.~Hao, M.~Dai  and B.~Rassouli,
\newblock ``Rate splitting for mimo wireless networks: A promising {PHY}-layer
  strategy for lte evolution,''
\newblock {\em IEEE Communications Magazine}, vol. 54, no. 5, pp. 98--105,
  2016.

\bibitem{mao2018energy}
Y.~Mao, B.~Clerckx  and V.~O. Li,
\newblock ``Energy efficiency of rate-splitting multiple access, and
  performance benefits over {SDMA} and {NOMA},''
\newblock in {\em 2018 15th International Symposium on Wireless Communication
  Systems (ISWCS)}. IEEE, 2018, pp. 1--5.

\bibitem{FreundJ01}
R.~W. Freund and F.~Jarre,
\newblock ``Solving the sum-of-ratios problem by an interior-point method,''
\newblock {\em J. Glob. Optim.}, vol. 19, no. 1, pp. 83--102, 2001.

\bibitem{WangV12}
T.~Wang and L.~Vandendorpe,
\newblock ``Successive convex approximation based methods for dynamic spectrum
  management,''
\newblock in {\em Proceedings of {IEEE} International Conference on
  Communications, {ICC} 2012, Ottawa, ON, Canada, June 10-15, 2012}. 2012, pp.
  4061--4065, {IEEE}.

\bibitem{ScutariFSPP14}
G.~Scutari, F.~Facchinei, P.~Song, D.~P. Palomar  and J.~Pang,
\newblock ``Decomposition by partial linearization: Parallel optimization of
  multi-agent systems,''
\newblock {\em {IEEE} Trans. Signal Process.}, vol. 62, no. 3, pp. 641--656,
  2014.

\bibitem{YeC20}
C.~Ye and Y.~Cui,
\newblock ``Stochastic successive convex approximation for general stochastic
  optimization problems,''
\newblock {\em {IEEE} Wirel. Commun. Lett.}, vol. 9, no. 6, pp. 755--759, 2020.

\bibitem{yang2021optimization}
Z.~Yang, M.~Chen, W.~Saad  and M.~Shikh-Bahaei,
\newblock ``Optimization of rate allocation and power control for rate
  splitting multiple access (rsma),''
\newblock {\em IEEE Transactions on Communications}, vol. 69, no. 9, pp.
  5988--6002, 2021.

\bibitem{ShenY18}
K.~Shen and W.~Yu,
\newblock ``Fractional programming for communication systems - {P}art {I:}
  power control and beamforming,''
\newblock {\em {IEEE} Trans. Signal Process.}, vol. 66, no. 10, pp. 2616--2630,
  2018.

\bibitem{lawler1966branch}
E.~L. Lawler and D.~E. Wood,
\newblock ``Branch-and-bound methods: A survey,''
\newblock {\em Operations research}, vol. 14, no. 4, pp. 699--719, 1966.

\bibitem{shi2018non}
C.~Shi, F.~Wang, M.~Sellathurai  and J.~Zhou,
\newblock ``Non-cooperative game theoretic power allocation strategy for
  distributed multiple-radar architecture in a spectrum sharing environment,''
\newblock {\em IEEE Access}, vol. 6, pp. 17787--17800, 2018.

\bibitem{shi2019robust}
C.~Shi, F.~Wang, S.~Salous  and J.~Zhou,
\newblock ``A robust stackelberg game-based power allocation scheme for
  spectral coexisting multistatic radar and communication systems,''
\newblock in {\em IEEE Radar Conference (RadarConf)}, 2019, pp. 1--5.

\bibitem{li2017joint}
B.~Li and A.~P. Petropulu,
\newblock ``Joint transmit designs for coexistence of mimo wireless
  communications and sparse sensing radars in clutter,''
\newblock {\em IEEE Transactions on Aerospace and Electronic Systems}, vol. 53,
  no. 6, pp. 2846--2864, 2017.

\bibitem{liu2018transmission}
J.~Liu and M.~Saquib,
\newblock ``Transmission design for a joint mimo radar and mu-mimo downlink
  communication system,''
\newblock in {\em IEEE Global Conference on Signal and Information Processing},
  2018, pp. 196--200.

\bibitem{liu2017robust}
F.~Liu, C.~Masouros, A.~Li  and T.~Ratnarajah,
\newblock ``Robust mimo beamforming for cellular and radar coexistence,''
\newblock {\em IEEE Wireless Communications Letters}, vol. 6, no. 3, pp.
  374--377, 2017.

\bibitem{qian2022transmission}
J.~Qian, Z.~Liu, K.~Wang, N.~Fu  and J.~Wang,
\newblock ``Transmission design for radar and communication spectrum sharing
  enhancement,''
\newblock {\em IEEE Transactions on Vehicular Technology}, to appear.

\bibitem{hong2021interference}
B.~Hong, W.-Q. Wang  and C.-C. Liu,
\newblock ``Interference utilization for spectrum sharing radar-communication
  systems,''
\newblock {\em IEEE Transactions on Vehicular Technology}, vol. 70, no. 8, pp.
  8304--8308, 2021.

\bibitem{rihan2018optimum}
M.~Rihan and L.~Huang,
\newblock ``Optimum co-design of spectrum sharing between mimo radar and mimo
  communication systems: An interference alignment approach,''
\newblock {\em IEEE Transactions on Vehicular Technology}, vol. 67, no. 12, pp.
  11667--11680, 2018.

\bibitem{hong2019ergodic}
B.~Hong, W.-Q. Wang  and C.-C. Liu,
\newblock ``Ergodic interference alignment for spectrum sharing
  radar-communication systems,''
\newblock {\em IEEE Transactions on Vehicular Technology}, vol. 68, no. 10, pp.
  9785--9796, 2019.

\bibitem{kafafy2022optimal}
M.~Kafafy, A.~S. Ibrahim  and M.~H. Ismail,
\newblock ``Optimal placement of reconfigurable intelligent surfaces for
  spectrum coexistence with radars,''
\newblock {\em IEEE Transactions on Vehicular Technology}, 2022.

\bibitem{clerckx2019rate}
B.~Clerckx, Y.~Mao, R.~Schober  and H.~V. Poor,
\newblock ``Rate-splitting unifying {SDMA, OMA, NOMA,} and multicasting in
  {MISO} broadcast channel: {A} simple two-user rate analysis,''
\newblock {\em IEEE Wireless Communications Letters}, vol. 9, no. 3, pp.
  349--353, 2019.

\bibitem{joudeh2016sum}
H.~Joudeh and B.~Clerckx,
\newblock ``Sum-rate maximization for linearly precoded downlink multiuser
  {MISO} systems with partial {CSIT}: {A} rate-splitting approach,''
\newblock {\em IEEE Transactions on Communications}, vol. 64, no. 11, pp.
  4847--4861, 2016.

\bibitem{hao2015rate}
C.~Hao, Y.~Wu  and B.~Clerckx,
\newblock ``Rate analysis of two-receiver miso broadcast channel with finite
  rate feedback: A rate-splitting approach,''
\newblock {\em IEEE Transactions on Communications}, vol. 63, no. 9, pp.
  3232--3246, 2015.

\bibitem{joudeh2016robust}
H.~Joudeh and B.~Clerckx,
\newblock ``Robust transmission in downlink multiuser miso systems: A
  rate-splitting approach,''
\newblock {\em IEEE Transactions on Signal Processing}, vol. 64, no. 23, pp.
  6227--6242, 2016.

\bibitem{yin2021rate}
L.~Yin, B.~Clerckx  and Y.~Mao,
\newblock ``Rate-splitting multiple access for multi-antenna broadcast channels
  with statistical csit,''
\newblock in {\em IEEE Wireless Communications and Networking Conference
  Workshops}, 2021, pp. 1--6.

\bibitem{yin2020rate}
L.~Yin and B.~Clerckx,
\newblock ``Rate-splitting multiple access for multibeam satellite
  communications,''
\newblock in {\em 2020 IEEE International Conference on Communications
  Workshops (ICC Workshops)}. IEEE, 2020, pp. 1--6.

\bibitem{mao2019rate}
Y.~Mao, B.~Clerckx  and V.~O. Li,
\newblock ``Rate-splitting for multi-user multi-antenna wireless information
  and power transfer,''
\newblock in {\em IEEE International Workshop on Signal Processing Advances in
  Wireless Communications}, 2019, pp. 1--5.

\bibitem{yang2020energy}
Z.~Yang, J.~Shi, Z.~Li, M.~Chen, W.~Xu  and M.~Shikh-Bahaei,
\newblock ``Energy efficient rate splitting multiple access ({RSMA}) with
  reconfigurable intelligent surface,''
\newblock in {\em IEEE International Conference on Communications Workshops
  (ICC Workshops)}. IEEE, 2020, pp. 1--6.

\bibitem{bansal2021rate}
A.~Bansal, K.~Singh, B.~Clerckx, C.-P. Li  and M.-S. Alouini,
\newblock ``Rate-splitting multiple access for intelligent reflecting surface
  aided multi-user communications,''
\newblock {\em IEEE Transactions on Vehicular Technology}, vol. 70, no. 9, pp.
  9217--9229, 2021.

\bibitem{dizdar2022energy}
O.~Dizdar, A.~Kaushik, B.~Clerckx  and C.~Masouros,
\newblock ``Energy efficient dual-functional radar-communication:
  Rate-splitting multiple access, low-resolution {DAC}s, and {RF} chain
  selection,''
\newblock {\em arXiv preprint arXiv:2202.09128}, 2022.

\bibitem{xu2021rate}
C.~Xu, B.~Clerckx, S.~Chen, Y.~Mao  and J.~Zhang,
\newblock ``Rate-splitting multiple access for multi-antenna joint radar and
  communications,''
\newblock {\em IEEE Journal of Selected Topics in Signal Processing}, vol. 15,
  no. 6, pp. 1332--1347, 2021.

\bibitem{boggs1995sequential}
P.~T. Boggs and J.~W. Tolle,
\newblock ``Sequential quadratic programming,''
\newblock {\em Acta numerica}, vol. 4, pp. 1--51, 1995.

\bibitem{V0004338}
V.~V. Vazirani,
\newblock {\em Approximation algorithms},
\newblock Springer, 2001.

\bibitem{zhang2021design}
Q.~Zhang, X.~Wang, Z.~Li  and Z.~Wei,
\newblock ``Design and performance evaluation of joint sensing and
  communication integrated system for 5{G} mm{W}ave enabled {CAV}s,''
\newblock {\em IEEE Journal of Selected Topics in Signal Processing}, vol. 15,
  no. 6, pp. 1500--1514, 2021.

\bibitem{head1985broyden}
J.~D. Head and M.~C. Zerner,
\newblock ``A {B}royden$-${F}letcher$-${G}oldfarb$-${S}hanno optimization
  procedure for molecular geometries,''
\newblock {\em Chemical physics letters}, vol. 122, no. 3, pp. 264--270, 1985.

\bibitem{GoldfarbL91}
D.~Goldfarb and S.~Liu,
\newblock ``An o(n\({}^{\mbox{3}}\)l) primal interior point algorithm for
  convex quadratic programming,''
\newblock {\em Math. Program.}, vol. 49, pp. 325--340, 1991.

\bibitem{LeeS15}
Y.~T. Lee and A.~Sidford,
\newblock ``Efficient inverse maintenance and faster algorithms for linear
  programming,''
\newblock in {\em {IEEE} 56th Annual Symposium on Foundations of Computer
  Science, {FOCS} 2015, Berkeley, CA, USA, 17-20 October, 2015}, V.~Guruswami,
  Ed. 2015, pp. 230--249, {IEEE} Computer Society.

\bibitem{Frenk2005}
J.~B.~G. Frenk and S.~Schaible,
\newblock {\em Fractional Programming}, pp. 335--386,
\newblock Springer New York, New York, NY, 2005.

\end{thebibliography}
	
\end{document}